\newtheorem{definition}{Definition} 
\newtheorem{example}{Example} 
\newtheorem{lemma}{Lemma} 
\newtheorem{proposition}{Proposition} 
\newtheorem{theorem}{Theorem} 
\newtheorem{corollary}{Corollary} 
\newcommand{\red}[1]{{\color{red} #1}}
\newcommand{\blue}[1]{{\color{blue} #1}}
\newcommand*{\ie}{\textit{i.e.}, }
\newcommand*{\eg}{\textit{e.g.}, }
\newcommand{\sep}{\mathit{\;]\![\;}}
\newcommand{\lra}{\longrightarrow}
\newcommand{\lrac}{\hookrightarrow}
\newcommand{\lracn}{\Longrightarrow}
\newcommand{\term}{$\cT\!\!{erm}$}
\newcommand{\var}{{\cV}ar}
\newcommand{\emptyseq}{\varepsilon} 
\renewcommand{\emptyset}{\{\}} 
\renewcommand{\phi}{\varphi}
\newcommand{\ol}[1]{\overline{#1}}  
\newcommand{\true}{\mathsf{true}}
\newcommand{\false}{\mathsf{false}}
\newcommand{\clauses}{\mathsf{rules}}
\newcommand{\rules}{\mathsf{rules}}
\def \tuplec#1{\langle\!| #1 |\!\rangle}
\newcommand{\mystate}[2]{\langle{#1}\,|\,{#2}\rangle}
\newcommand{\catom}{\operatorname{\mathit{c-atom}}}
\newcommand{\set}{\operatorname{\mathit{Set}}}
\newcommand{\alts}{\mathit{alts}}
\newcommand{\negcon}{\mathit{neg\_constr}}
\newcommand{\TR}{{\mathtt{TR}}}
\newcommand{\TC}{{\mathtt{TC}}}
\newcommand{\PTC}{{\mathtt{PTC}}}
\long\def\comment#1{}
\def\defemb#1#2{\expandafter\def\csname #1\endcsname
                              {\relax\ifmmode #2\else\hbox{$#2$}\fi}}
\begin{document}

\title[Concolic Testing in CLP]{Concolic Testing in CLP}

\author[F.~Mesnard, \'E.~Payet and G.~Vidal]
{FRED MESNARD, \'ETIENNE PAYET \\
  LIM - Universit\'e de la R\'eunion, France\\
  \email{\{frederic.mesnard, etienne.payet\}@univ-reunion.fr}
\and
GERM\'AN VIDAL
\thanks{This author has been partially supported by EU (FEDER) and
    Spanish MCI/AEI under grants TIN2016-76843-C4-1-R and
    PID2019-104735RB-C41, and by the \emph{Generalitat Valenciana}
    under grant Prometeo/2019/098 (DeepTrust).}
\\
MiST, VRAIN, Universitat Polit\`ecnica de Val\`encia\\
\email{gvidal@dsic.upv.es}
}


\maketitle

\label{firstpage}

\begin{abstract}
  Concolic testing is a popular software verification technique based
  on a combination of concrete and symbolic execution. Its main focus
  is finding bugs and generating test cases with the aim of maximizing
  code coverage.
  A previous approach to concolic testing in logic programming was
  not sound because it only dealt with positive constraints (by means
  of substitutions) but could not represent negative constraints.
  In this paper, we present a novel framework for concolic testing of
  CLP programs that generalizes the previous technique. In the CLP
  setting, one can represent both positive and negative constraints in
  a natural way, thus giving rise to a sound and (potentially) more
  efficient technique.
  Defining verification and testing techniques for CLP programs is
  increasingly relevant since this framework is becoming popular as an
  intermediate representation to analyze programs written in other
  programming paradigms.\\[2ex]
 \emph{This paper is under consideration for acceptance in Theory and Practice of Logic Programming (TPLP).}
\end{abstract}

\begin{keywords}
  CLP, verification, concolic testing.
\end{keywords}

\section{Introduction}\label{sec:intro}

Symbolic execution was first proposed by King \citeyear{Kin76} as a technique
for automated test case generation. Essentially, the program is run
with some unknown (symbolic) input data. Symbolic execution then
proceeds by speculatively exploring all possible computations.
Let us consider a simple imperative language with \emph{conditionals}
and that the \emph{trace} of an execution is denoted by the sequence
of choices made in the conditionals of this execution (\eg the trace
$\mathtt{tft}$ denotes that execution entered the \texttt{true} branch
of the first conditional, then the \texttt{false} branch of the second
conditional, and finally the \texttt{true} branch of the third
conditional).

During symbolic execution, whenever a conditional with condition $c$
is found, one should explore both branches. In one of the branches,
$c$ is assumed; in the other branch, one can assume the negation of
this condition \ie $\neg c$. By gathering all the constraints
assumed in a symbolic execution, and solving them, one can produce
values for the input arguments.
%
Symbolic execution methods are \emph{sound} in the following sense: if
a symbolic execution with trace $\pi$ collects constraints
$c_1,\ldots,c_n$, then solving $c_1\land \ldots\land c_n$ will produce
values for a concrete call whose execution will have the same trace
$\pi$ (\ie it will follow the same execution path of the symbolic
execution that produced these constraints).
This is a key property in order to achieve a good program
coverage. Note that test case generation based on symbolic execution
is in principle aimed at a full path coverage.

Concolic testing \cite{GKS05,SMA05} can be seen as an evolution of
test case generation methods based on symbolic execution. The main
difference is that, now, both concrete and symbolic executions are
performed in parallel (thus the term ``concolic'': \emph{conc}rete $+$
symb\emph{olic}).
Roughly speaking, concolic testing proceeds iteratively as follows.
It starts with an arbitrary concrete call. Then, this call is executed
with the standard semantics, together with a corresponding symbolic
call that mimics the execution of the concrete one. This is called a
concolic execution. Once this concolic execution terminates, one can
produce alternative test cases by negating some of the collected
constraints and, then, solving them. For example, if we gathered the
sequence of constraints $c_1,c_2,c_3$ (\eg associated to the execution
of three conditionals) with associated trace $\mathtt{ttt}$, we can
now solve the constraints $\neg c_1$ (trace $\mathtt{f}$),
$c_1\land \neg c_2$ (trace $\mathtt{tf}$) and
$c_1\land c_2\land \neg c_3$ (trace $\mathtt{ttf}$) in order to
produce three new, alternative test cases that will follow a different
execution path. A new iteration starts by considering any of the new
test cases, and so forth. In principle, the process terminates when
all alternative test cases have been processed. Nevertheless, the
search space is typically infinite (as in symbolic execution based
methods).

Concolic execution has gained popularity because of some advantages
over the symbolic execution based methods. For instance, one can
automatically detect some run-time errors since concolic testing
performs standard (concrete) executions and, thus, if some error is
spotted, we know that this is an actual run-time error. Furthermore, when the
constraints become too complex for state-of-the-art solvers and the
methods based on symbolic execution just give up, concolic testing
can still inject some concrete data (from the concrete component) and
simplify the constraints in order to make them tractable.


Although concolic testing is quite popular in imperative and
object-oriented languages, only a few works can be found in the context
of functional and logic programming languages. Some notable exceptions
are those of Giantsios et al.\ \citeyear{GPS15} and
Palacios and Vidal \citeyear{PV15} for a functional language,
and those of Vidal \citeyear{Vid14} and Mesnard et al.\ \citeyear{MPV15}
for a logic language. In the context of logic programming, concolic
execution becomes particularly challenging because computing the
alternatives of a predicate call is not as straightforward as in
imperative programming, where negating a condition suffices. Consider,
\eg a predicate call that matches rules $r_1$ and $r_2$. Here, a
full path coverage should include test cases for all the following
alternatives: no rule is matched; only rule $r_1$ is matched; only
rule $r_2$ is matched; and both rules $r_1$ and $r_2$ are matched
(assuming all these cases are feasible). The problem of finding all
these alternative test cases is based on so-called \emph{selective
  unification} \cite{MPV15,MPV17}.


A limitation of the approach to concolic testing of Mesnard et al.\
\citeyear{MPV15} is that only \emph{positive} constraints (represented as
substitutions) are gathered during concolic execution. As a
consequence, the algorithm is not sound in the above sense, as
witnessed by the following example:

\begin{example}
  Let us consider the following simple program:
  \[
    \begin{array}{lll}
      p(f(a)). & (r_1)\\
      p(f(X))\leftarrow q(X). & (r_2)\\
      q(b). & (r_3) \\
    \end{array}
  \]
  where terms are built, \eg from constants $a,b,c$ and the unary
  function symbol $f$.
  If we consider a semantics that only computes the first solution of
  a goal (as in the approach by Mesnard et al.\ \citeyear{MPV15}), the only \emph{feasible} execution
  paths for an initial goal that calls predicate $p$ are the
  following:
  \begin{itemize}
  \item A call that matches no rule, \eg $p(a)$.
  \item A call that matches both rules $r_1$ and $r_2$ and then
    succeeds, \eg $p(f(a))$.
  \item A call that matches only rule $r_2$ and, then, calls predicate
    $q$ and matches rule $r_3$, \eg $p(f(b))$.
  \item A call that matches only rule $r_2$ and, then, calls predicate
    $q$ but does not match rule $r_3$, \eg $p(f(c))$.
  \end{itemize}
  However, the concolic testing procedure of Mesnard et al.\ \citeyear{MPV15} may fail to
  compute the last test case.
  For instance, let us consider that the process starts with the
  initial call $p(a)$, which matches no rule. Now, the computed
  alternatives could be $p(f(a))$ that matches both $r_1$ and $r_2$
  and $p(f(b))$ that only matches $r_2$.\footnote{Note that matching
    only $r_1$ is not feasible in this case. \textit{E.g.}, there is
    no call of the form $p(t)$ for some term $t$ such that $p(t)$ matches
    rule $r_1$ but not $r_2$.}
  Let us first consider $p(f(a))$. This call immediately succeeds, so
  there are no more alternatives to be computed.
  Consider now the call $p(f(b))$. This call first matches rule $r_2$
  and, then, calls $q(b)$, which succeeds.
  Here, one can still generate a new alternative test case: one that
  (only) matches rule $r_2$ and, then, fails to match rule $r_3$.
  Unfortunately, the concolic testing algorithm of Mesnard et al.\ 
  \citeyear{MPV15} may generate $p(f(a))$ again since it only knows that
  the argument of $p$ must unify with $f(X)$ (to match rule $r_2$) and
  that $X$ must not unify with $b$ (to avoid matching rule $r_3$). Thus,
  $p(f(a))$ is a solution. However, this is not the solution we
  expected, since this call will match rule $r_1$ and succeed
  immediately.
\end{example}
%
%
%
%
In this work, we consider the development of a concolic testing
framework for CLP programs, where both positive and \emph{negative}
constraints can be represented in a natural way.
Our main contributions are the following:
\begin{itemize}
\item We extend the original framework \cite{MPV15} to CLP
  programs. In particular, we illustrate our approach with two
  instances: CLP(\term) and CLP($\cN$). 
  As an advantage of this formulation, efficient external constraint solvers can
  be used to produce test cases.

\item In contrast to previous approaches, we prove the soundness of
  our approach, \ie whenever a test case for a given execution path
  is produced, we can ensure that the execution of this test case will
  indeed follow the associated path. This can be ensured thanks to the
  use of \emph{negative} constraints.

\item We prove that, if the constraint domain is decidable, then the
so-called selective unification problem is decidable too. Thus we
extend the results of Mesnard et al.\ \citeyear{MPV17}.
\end{itemize}
Defining verification and testing techniques for CLP programs is increasingly
relevant since this setting is becoming popular as an intermediate
representation to analyze programs written in other programming
paradigms, see, \eg the work of Gange et al.\ \citeyear{GNSSS15} 
and Gurfinkel et al.\ \citeyear{GKKN15}.
Furthermore, concolic testing may be useful in the context of
run-time verification techniques; see, \eg the work of
Stulova et al.\ \citeyear{SMH14}.
Therefore, our approach to
concolic testing may constitute a significant contribution to these
research areas.

Some more details and proofs of technical results can be found in the
Appendix.

\section{Preliminaries}\label{sec:prel}

We assume some familiarity with the standard definitions and notations
for logic programming as introduced by Apt \citeyear{Apt97} and for constraint
logic programming as introduced 
by Jaffar et al.\ \citeyear{JMMS98}.
Nevertheless, in order to make the paper as self-contained as possible,
we present in this section the main concepts which are needed to understand
our development.

We denote by $|S|$ the cardinality of the set $S$ and by $\mathbb{N}$
the set of  natural numbers.
%
From now on, we fix an infinite countable set $\cV$ of \emph{variables}
together with a \emph{signature} $\Sigma$, \ie a pair $\langle F,\Pi_C\rangle$
where $F$ is a finite set of \emph{function symbols} and $\Pi_C$ is
a finite set of \emph{predicate symbols} with $F\cap\Pi_C=\emptyset$
and $(F\cup\Pi_C)\cap\cV=\emptyset$.
Every element of $F\cup\Pi_C$ has an \emph{arity} which is the number
of its arguments. We write $f/n\in F$ (resp. $p/n\in \Pi_C$) to denote
that $f$ (resp. $p$) is an element of $F$ (resp. $\Pi_C$) whose arity
is $n\geq 0$. A \emph{constant symbol} is an element of $F$ whose arity
is 0.

A \emph{term} is a variable, a constant symbol or an entity
$f(t_1,\dots,t_n)$ where $f/n\in F$, $n\geq 1$ and $t_1,\dots,t_n$
are terms. For any term $t$, we let $\var(t)$ denote the set of variables
occurring in $t$. This notation is naturally extended to sets of terms.
We say that $t$ is \emph{ground} when $\var(t)=\emptyset$.

An \emph{atomic constraint} is an element $p/0$ of $\Pi_C$ or an entity
$p(t_1,\dots,t_n)$ where $p/n\in\Pi_C$, $n\geq 1$ and $t_1,\dots,t_n$
are terms. A first-order \emph{formula} on $\Sigma$ is built from atomic
constraints in the usual way using the logical connectives $\land$, $\lor$,
$\lnot$, $\rightarrow$, $\leftrightarrow$ and the quantifiers $\exists$ and
$\forall$. For any formula $\phi$, we let $\var(\phi)$ denote its set of
free variables and $\exists \phi$ (resp. $\forall \phi$) its
existential (resp. universal) closure.

We fix a \emph{$\Sigma$-structure} $\cD$, \ie a pair
$\langle D,[\cdot]\rangle$ which is an interpretation of the symbols
in $\Sigma$. The set $D$ is called the \emph{domain} of $\cD$ and $[\cdot]$
maps each $f/0\in F$ to an element of $D$, each $f/n\in F$ with $n\geq 1$
to a function $[f]: D^n \rightarrow D$, each $p/0\in \Pi_C$ to an element of
$\{0,1\}$, and each $p/n\in\Pi_C$ with $n\geq 1$ to a boolean function
$[p]: D^n \rightarrow \{0,1\}$.
We assume that the binary predicate symbol $=$ is in $\Sigma$ and
is interpreted as identity in $D$.
A \emph{valuation} is a mapping from $\cV$ to $D$.
Each valuation $v$ extends by morphism to terms.
A valuation $v$ induces a valuation $[\cdot]_v$ of
terms to $D$ and of formulas to $\{0,1\}$.

Given a formula $\phi$ and a valuation $v$, we write $\cD\models_v \phi$
when $[\phi]_v=1$. We write $\cD\models \phi$ when $\cD\models_v \phi$ for
all valuations $v$. Notice that $\cD\models \forall \phi$ if and only if
$\cD\models \phi$, that $\cD\models \exists \phi$ if and only if there exists
a valuation $v$ such that $\cD\models_v \phi$, and that
$\cD\models \lnot\exists \phi$ if and only if $\cD\models \lnot \phi$.
We say that a formula $\phi$ is \emph{satisfiable} (resp.
\emph{unsatisfiable}) in $\cD$ when $\cD\models \exists \phi$ (resp.
$\cD\models \lnot \phi$).

We fix a set $\cL$ of admitted formulas, the elements of which are
called \emph{constraints}.
In this paper, we suppose that $\cL$  contains
all the atomic constraints, the always satisfiable constraint $\true$
and the unsatisfiable constraint $\false$, and any quantified boolean combination
of such formulae (while usually $\cL$ only contains
conjunctions of atomic constraints which are implicitly existentially quantified).
We assume that there is a computable function $\mathit{solv}$ which maps each
$c\in\cL$ to one of \texttt{true} or \texttt{false} indicating whether $c$
is satisfiable or unsatisfiable in $\cD$.
In particular, it implies that the constraint domain has to be decidable.
We call $\mathit{solv}$ the \emph{constraint solver}.

\begin{example}[CLP($\cN$) and CLP(\term)]\label{example-naturals}
  The constraint domain $\cN$ has $<$, $\leq$, $=$, $\neq$, $\geq$,
  $>$ as predicate symbols, $+$  as function symbol and
  sequences of digits as constant symbols.
  The domain of computation is the structure with the set of naturals,
  denoted by $\mathbb{N}$, as domain and where the predicate symbols and
  the function symbol are interpreted as the usual relations and function
  over the naturals. A constraint solver for $\cN$
  is described by, \eg Comon and Kirchner \citeyear{CK99}.

\comment{
\begin{example}[CLP($\cQ$)]\label{example-rationals}
  The constraint domain $\cQ$ has $<$, $\leq$, $=$, $\neq$, $\geq$,
  $>$ as predicate symbols, $+$, $-$  as function symbols and
  sequences of digits as constant symbols.
  The domain of computation is the structure with the set of rationals,
  denoted by $\mathbb{Q}$, as domain and where the predicate symbols and
  the function symbols are interpreted as the usual relations and functions
  over the rationals. A constraint solver for $\cQ$
  is described in, \eg \cite{FR75}.
\end{example}
}

  The constraint domain \term{} has $=$, $\neq$ as predicate symbols and strings
  of alphanumeric characters as function symbols. The domain of computation
  is the set of \emph{finite trees} (or, equivalently, of finite
  terms), $\mathit{Tree}$. The interpretation of a constant is a tree with a
  single node labeled with the constant. The interpretation of an $n$-ary
  function symbol $f$ is the function
  $f_{\mathit{Tree}}:\mathit{Tree}^n \rightarrow \mathit{Tree}$
  mapping the trees $T_1$, \dots, $T_n$ to a new tree with root labeled
  with $f$ and with $T_1$, \dots, $T_n$ as child nodes. A constraint solver
  for \term{} is also described in \cite{CK99}.
\end{example}

We let $\ol{o_n}$ denote the finite sequence of syntactic objects
$o_1,\ldots,o_n$; we also write $\ol{o}$ when the number of
elements is not relevant. We let $\emptyseq$ denote the empty
sequence and $\ol{o}, \ol{o'}$ denote the concatenation of
sequences $\ol{o}$ and $\ol{o'}$.
Sequences of distinct variables are denoted by $\ol{X}$, $\ol{Y}$ or
$\ol{Z}$ and are sometimes considered as sets of variables.
Sequences
of (not necessarily distinct) terms are denoted by $\ol{s}$, $\ol{t}$ or
$\ol{u}$.
Given two sequences of $n$ terms $\ol{s_n}$ and $\ol{t_n}$, we write
$\ol{s_n}=\ol{t_n}$ to denote the constraint
$s_1=t_1 \land \dots \land s_n=t_n$.
We also extend the notation $[\cdot]_v$ by letting $[\ol{s_n}]_v$ denote
the sequence $[s_1]_v,\ldots,[s_n]_v$.

\comment{
A structure $\cD$ admits \emph{quantifier elimination}
if for each first-order formula $\phi$ there exists a quanti\-fier-free
formula $\psi$ such that $\cD \models \phi \leftrightarrow \psi$.
$\cD$ admits \emph{variable elimination} if for each quantifier-free
formula  $\phi(\ol{X},Y)$ there exists a quantifier-free formula
$\psi(\ol{X})$ such that
$\cD \models \exists Y \phi(\ol{X},Y) \leftrightarrow  \psi(\ol{X})$.
For instance, $\cQ$ admits variable elimination
via the Fourier-Motzkin algorithm, see, \eg \cite{MS98}.
}

The signature in which all programs and queries under consideration are
included is $\Sigma_L = \langle F, \Pi_C\cup\Pi_P\rangle$ where $\Pi_P$
is the set of predicate symbols that can be defined in programs, with
$\Pi_C\cap\Pi_P=\emptyset$.
An \emph{atom} has the form $p(\ol{s_n})$ where $p/n\in\Pi_P$ and $\ol{s_n}$
is a sequence of terms. The definitions and notations on terms
($\var$, ground,\ldots) are extended to atoms in the natural way.
We write $[p(\ol{s_n})]_v$ to denote the atom $p([\ol{s_n}]_v)$.
For any sequence $\ol{A_m}$ of atoms we let
$[\ol{A_m}]_v$ denote the sequence $[A_1]_v,\ldots,[A_m]_v$.
A \emph{rule} has the form $H \leftarrow c \land \ol{B}$
where $H$ is an atom called the \emph{head} of the rule, $c$ is a satisfiable
constraint and $\ol{B}$ is a finite sequence of atoms. For the sake of
readability, in examples we may simplify rules of the form
$H \leftarrow c \land \emptyseq$ to $H \leftarrow c$.
A \emph{program} is a finite set of rules.
A \emph{state} has the form $\mystate{d}{\ol{B}}$ where $\ol{B}$ is a
finite sequence of atoms and $d$ is a constraint. A
\emph{constraint atom} is a state of the form $\mystate{d}{p(\ol{t})}$.
We denote states as $Q$, $Q'$\ldots{} or
$R$, $R'$\ldots{} and constraint atoms as $C$, $C'$\ldots{}
For any state $Q:=\mystate{d}{\ol{B}}$ and any constraint $d'$, we let
$Q \land d'$ denote the state $\mystate{d \land d'}{\ol{B}}$.

Any state can be seen as a finite description of a possibly infinite set
of sequences of atoms, the arguments of which are values from $D$. More
precisely, the \emph{set described by} a state $Q:=\mystate{d}{\ol{B}}$
is defined as
$\set(Q) = \left\{ [\ol{B}]_v \mid \cD \models_v d \right\}$.
For instance, for $Q:=\mystate{Y\leq X+2}{p(X),q(Y)}$ in
$\cN$, we have $p(0),q(2)\in\set(Q)$.
For any states $Q:=\mystate{c}{\ol{A}}$ and $Q':=\mystate{d}{\ol{B}}$,
we say that $Q'$ is \emph{less instantiated} (or \emph{more general})
than $Q$ (equivalently, that $Q$ is more restricted than $Q'$), and we
write $Q\leq Q'$, when $\ol{A}$ and $\ol{B}$ are variants and,
moreover, $\set(Q)\subseteq \set(Q')$; furthermore, we say they are
\emph{equivalent} when $\set(Q) = \set(Q')$ (instead of $\set(Q)\subseteq \set(Q')$).
Furthermore, we say that $Q$ and $Q'$ are \emph{equivalent}, and we write
$Q \equiv Q'$, when $\set(Q) = \set(Q')$.

We consider the usual operational semantics given in terms of
\emph{derivations} from states to states. Let
$\mystate{d}{p(\ol{u}),\ol{B}}$ be a state and
$p(\ol{s})\leftarrow c \land \ol{B'}$ be a fresh copy of a rule $r$.
When $\mathit{solv}(\ol{s} = \ol{u} \land c \land d) = \mathtt{true}$
then (in this work, a fixed leftmost selection rule is assumed)
\[
  \mystate{d} {p(\ol{u}),\ol{B}}\lra_r
  \mystate{\ol{s}=\ol{u}\land c\land d}{\ol{B'},\ol{B}}
\]
is a  \emph{derivation step} of $\mystate{d}{p(\ol{u}),\ol{B}}$
with respect to $r$ with $p(\ol{s})\leftarrow c \land \ol{B'}$ as its
\emph{input rule}. A state $Q:=\mystate{d}{\ol{B}}$ is said to be
\emph{successful} if $\ol{B}$ is empty; it is said to be
\emph{failed} if $\ol{B}$ is not empty and no derivation step is
possible.
We write $Q \lra_P^+ Q'$ to summarize a finite number ($> 0$) of derivation
steps from $Q$ to $Q'$ where each input rule comes from program $P$.
Let $Q_0$ be a state. A sequence of derivation steps
$Q_0 \lra_{r_1} Q_1 \lra_{r_2} \cdots$ of maximal length is called a
\emph{finished derivation} of $P\cup\{Q_0\}$ when $r_1$, $r_2$, \dots are rules
from $P$ and the \emph{standardization apart} condition holds, \ie
each input rule used is variable disjoint from the initial state $Q_0$
and from the input rules used at earlier steps.

\section{Concolic Execution}
\label{sec:concolic-exec}

In this section, we introduce a \emph{concolic execution} semantics
for CLP programs that combines both \textit{conc}rete and
symb\textit{olic} execution.
Let us now introduce some auxiliary definitions. First, we consider
unification on constraint atoms:

\begin{definition}[$\approx$, unification]
  Let $C$ and $C'$ be two constraint atoms.
  If they have the same predicate symbol, \ie $C$
  has the form $\mystate{c}{p(\ol{s})}$
  and $C'$ has the form $\mystate{d}{p(\ol{t})}$
  then $C \approx C'$ denotes the formula
  $\ol{s} = \ol{t} \land c \land d$.
  Otherwise, $C \approx C'$ is $\false$.
  We say that $C$ and $C'$ \emph{unify},
  or that \emph{$C$ unifies with $C'$}, when
  $C \approx C'$ is satisfiable (\ie
  $\cD \models \exists (C \approx C')$ holds).
\end{definition}
%
%
The following auxiliary function, $\catom$, produces a constraint atom
associated to either a state, a rule or a collection of rules.
It selects the leftmost atom together with the constraint.

\begin{definition}[$\catom$]
  For any state $Q = \mystate{c}{\ol{A_n}}$, $n > 0$,
  we let $\catom(Q)=\mystate{c}{A_1}$.
  For any rule $r=H\leftarrow c \land \ol{B}$, we let
  $\catom(r)=\mystate{c}{H}$.
  For any set of rules $\cR$ (resp. sequence of rules $\ol{r_n}$),
  we let $\catom(\cR) = \{\catom(r) \mid r\in \cR\}$
  (resp. $\catom(\ol{r_n}) = \catom(r_1), \dots, \catom(r_n)$).
\end{definition}
Function $\clauses$ is then used to determine the program rules that match
a particular state:

\begin{definition}[$\clauses$]
  Given a state $Q$ and a set of rules $P$, we let
    \[
      \clauses(Q, P) = \left\{r\in P
        \;\middle|\;
        \begin{array}{l}
          \mathit{solv}(\catom(Q)\approx\catom(r')) = \mathtt{true}\\
          \text{for some fresh copy $r'$ of $r$}
        \end{array}
        \right\}\;.
\]
\end{definition}
The following function, $\negcon$, will be essential to
guarantee that symbolic execution is sound, so that symbolic states
do not unify with more rules than expected (see
below).

\begin{definition}[$\negcon$]
  \label{def:negcon}
  Let $C:=\mystate{c}{p(\ol{s})}$ and $H:=\mystate{d}{p(\ol{t})}$ be
  some variable disjoint constraint atoms.  The constraint $\negcon(C,H)$
  denotes
  $\forall V \;(\ol{s}\neq\ol{t}\vee \neg d)$, where $V$
  denotes the set of variables occurring in $H$.

  Let $\cH:=\{\ol{H_k}\}$ be a finite set of constraint atoms that have
  the same predicate symbol as $C$ and are variable disjoint with $C$.
  Then, we let $\negcon(C,\cH) =
  \negcon(C,H_1)\land \ldots\land\negcon(C,H_k)$.
  In particular, if $\cH=\emptyset$, then we have $\negcon(C,\cH) = \true$.
\end{definition}
Given a constraint atom $C$ and a set of constraint atoms $\cH$, we
have that $C\land \negcon(C,\cH)$ does not unify with any constraint
atom in $\cH$, as expected; moreover, it is \emph{maximal} in the
sense that, for any constraint $d$ such that $C\land d$ does not unify
with any constraint atom in $\cH$, $d$ will be less general than
$\negcon(C,\cH)$ (see Propositions~\ref{prop:gamma-do-not-unify} and
\ref{prop:gamma-maximality} in \ref{appendix:concolic-exec}).

In this work, we assume that we are interested in producing test cases
that achieve a so-called full \emph{path coverage}, so that every
predicate is called in all possible ways, as explained in the introduction. More precisely, given an initial state of the form
$\mystate{\true}{p(X_1,\ldots,X_n)}$, we aim at producing test
cases that cover all \emph{feasible} subtrees of
the execution space of $\mystate{\true}{p(X_1,\ldots,X_n)}$.
We note that, since the execution space of
$\mystate{\true}{p(X_1,\ldots,X_n)}$ is typically infinite,
so is the number of feasible subtrees and, thus, the number
of test cases.
Therefore, achieving a
full path coverage is not possible
and one should introduce some
strategy to ensure the termination of concolic testing (see below).

In order to identify each derivation so that we can keep track of the
already considered derivations in the execution space, we introduce
the following notion:

\begin{definition}[trace]
  Given a rule $r$, we let $\ell(r)$ denote its label, which is unique
  in a program.
  A \emph{trace} is a sequence of rule labels. The empty trace is
  denoted by $\epsilon$. Given a trace $\pi$ and a rule label $\ell$,
  we denote by $\pi.\ell$ the concatenation of $\ell$ to the end of
  trace $\pi$.

  Given a derivation with the standard operational semantics,
  $Q_0 \lra_{r_1} Q_1 \lra_{r_2} \ldots \lra_{r_n} Q_n$, the
  associated trace is $\ell_1\ell_2\ldots\ell_n$, where $\ell(r_i)=\ell_i$,
  $i=1,\ldots,n$.
\end{definition}
In the following, we consider that states can be labelled with a
trace \ie $S_\pi$ denotes a state $S$ which is labelled with
trace $\pi$. Let us now introduce the notion of concolic state:

\begin{definition}[concolic state]
  A \emph{concolic state} has the form $\tuplec{Q\sep S_\pi}$ where
  $Q,S$ are states such that $Q\leq S$ and $\pi$
  is a trace labelling state $S$.
  Here, $Q$ is called the \emph{concrete} state of
  $\tuplec{Q \sep S_\pi}$, while $S_\pi$ is called its \emph{symbolic}
  state; we sometimes omit the trace $\pi$ from the symbolic state
  when it is not relevant.
\end{definition}
In contrast to other programming paradigms, the notion of
\emph{symbolic} execution is very natural in CLP: the structure of
both $Q$ and $S$ is the same (\ie the sequence of atoms are variants),
and the only difference (besides some labeling for symbolic states) is
that some states might be more restricted in $Q$ than in $S$.

The standard operational semantics is now extended to concolic states
as follows:

\begin{definition}[concolic execution] \label{def:concolicexec}
  Let $P$ be a program and let $\tuplec{Q\sep S_\pi}$ be a concolic
  state. Then, we have a concolic execution step
  \[
    \tuplec{Q\sep S_\pi} \stackrel{r}{\lracn}_{\pi,R_Q,R_S} \tuplec{Q'\sep
      S'_{\pi.\ell(r)}}
  \]
  if the following conditions hold:
  \begin{itemize}
  \item $\clauses(Q,P) = R_Q\neq\emptyset$, $\clauses(S,P)=R_S$,
  \item $\gamma=\negcon(\catom(S), \catom(R_S \setminus R_Q))$,
  \item $r\in R_Q$, $Q \lra_r Q'$ and $S\wedge\gamma \lra_r S'$.
  \end{itemize}
  Besides the applied rule, $r$, the step is labelled with the current
  trace, $\pi$, the set of rules matching the concrete state, $R_Q$, and
  the set of rules matching the symbolic state, $R_S$.\footnote{This
    information can be safely ignored in this section. It will become
    relevant in the next section in order to generate test cases.}
  The applied rule is often omitted when it is not relevant.

  A concolic state
  $\tuplec{\mystate{c}{\ol{A}}\sep\mystate{d}{\ol{B}_\pi}}$ is said to
  be \emph{successful} if $\ol{A}=\ol{B}=\epsilon$; it is said
  to be \emph{failed} if they are not empty and no derivation step is
  possible. In either case, we say that $\pi$ is the trace of the
  derivation.
  The notion of (finished) derivation is extended from the standard
  semantics in the natural way.
\end{definition}
For each concolic state $\tuplec{Q\sep S}$ in a derivation, the
symbolic component, $S$, typically unifies with more rules than the
concrete component, $Q$, since $Q$ is more restricted than $S$ (and,
thus, $R_Q \subseteq R_S$; see 
below). However, we want the execution of the symbolic state to mimic
that of the concrete state. Therefore, both the concrete and symbolic
states can only be unfolded using a rule from $R_Q$. Furthermore, we
introduce a \emph{negative} constraint, $\gamma$, into the symbolic
state in order to avoid matching more rules than the concrete
state. For this purpose, we use function $\negcon$ introduced above.
In the remainder of the paper, we assume a fixed program $P$.

Let $\tuplec{Q\sep S}$ be a concolic state with $\clauses(Q,P) = R_Q$
and $\clauses(S,P)=R_S$. Our notion of concolic execution enjoys the
following properties (see~\ref{appendix:concolic-exec}):
\begin{itemize}
\item $Q\leq S$ implies $\clauses(Q,P) \subseteq \clauses(S,P)$.
\item $\clauses(S\wedge\gamma)=R_Q$, where
  $\gamma= \negcon(\catom(S), \catom(R_S \setminus R_Q))$. Therefore,
  $\gamma$ achieves the desired effect of preventing $S$ to unify with
  the rules in $R_S\setminus R_Q$.
\item If $\tuplec{Q\sep S} \lracn_{\pi,R_Q,R_S} \tuplec{Q'\sep S'}$,
  then $\tuplec{Q'\sep S'}$ is also a concolic state, which means that
  concolic execution is well defined in the sense that the property
  $Q\leq S$ is correctly propagated by concolic execution steps.
\end{itemize}
%
%
%
%
%
W.l.o.g., we only consider \emph{initial} concolic states of the
form $\tuplec{Q\sep S}$, where $Q= \mystate{c}{p(\ol{X})}$,
$S= \mystate{\true}{p(\ol{Y})_\epsilon}$, $Q$ and $S$ are variable
disjoint, and $\epsilon$ is the empty trace. Trivially, we have
$Q\leq S$.

In the following, we assume that all concolic execution derivations
start from an \emph{initial} concolic state, so they are well formed.

\begin{example} \label{ex:concolicexec}
  Consider the following CLP(\term) program:
  \[
    \begin{array}{l@{~~}l@{~~}l}
      \ell_1: & p(X) \leftarrow X=a. & (r_1)\\
      \ell_2: & p(s(Y)) \leftarrow \true \wedge q(Y). & (r_2)\\
      \ell_3: & q(W) \leftarrow W=a. & (r_3)\\
    \end{array}
  \]
  with rules, $r_1$, $r_2$ and $r_3$, where $\ell_1,\ell_2,\ell_3$ are
  unique identifiers for these rules.  Given the initial concolic
  state
  $\tuplec{\mystate{X=s(a)}{p(X)}\sep\mystate{\true}{p(N)_\epsilon}}$,
  we have the following concolic execution:
  \[
    \begin{array}{l}
      \tuplec{\blue{\mystate{X=s(a)}{p(X)}}\sep \red{\mystate{\true}{p(N)}_\epsilon}}\\
      \hspace{0ex}\lracn_{\epsilon,\{r_2\},\{r_1,r_2\}}
      \tuplec{\blue{\mystate{s(Y)=X\land X=s(a)}{q(Y)}}\sep \\
      \hspace{15ex}\sep\red{\mystate{s(Y')=N\wedge\forall X'\: (N\neq X'\vee X'\neq a)}{q(Y')}_{\ell_2}}}\\
      \hspace{0ex}\lracn_{\ell_2,\{r_3\},\{r_3\}}
      \tuplec{\blue{\mystate{W=Y\land W=a\land s(Y)=X\land X=s(a)}{\epsilon}}\\
      \hspace{14ex}\sep \red{\mystate{W'=Y'\land W'=a\land
      s(Y')=N\wedge\forall X'\: (N\neq X'\vee X'\neq a)}{\epsilon}_{\ell_2\ell_3}}}\\
    \end{array}
  \]
  In the first step, the following negative constraint is computed:
  \[
    \gamma_1 =
    \negcon(\mystate{\true}{p(N)},\{\mystate{X'=a}{p(X')}\}) = \forall
    X' (N\neq X'\vee X'\neq a)
  \]
  so that
  $
    \mystate{\true}{p(N)} \wedge \gamma_1 =
    \mystate{\forall X' (N\neq X'\vee X'\neq a)}{p(N)}
  $.
  In the second step, we have $\gamma_2 = \true$ since the matching
  rules are the same for both the concrete and symbolic states. Hence,
  no additional negative constraint is added to the symbolic state.
  %
  %
  The trace of the derivation is thus $\ell_2\ell_3$, \ie an
  application of rule $r_2$ followed by an application of rule $r_3$.
\end{example}
Now, we can state that concolic execution is indeed a conservative
extension of the standard operational semantics:

\begin{theorem}\label{theorem:trace}
  Let $\tuplec{Q\sep S}$ be an initial concolic state. Then, we have
  $Q \lra^\ast Q'$ iff
  $ \tuplec{Q\sep S} \lracn^\ast \tuplec{Q''\sep S'} $, where
  $Q'\equiv Q''$. Moreover, the trace of both derivations is the same.
\end{theorem}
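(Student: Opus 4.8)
The plan is to prove the two directions of the ``iff'' separately, and in each direction proceed by induction on the length of the derivation, relying on the three structural properties of concolic execution listed just before the statement (in particular that $Q\leq S$ is an invariant, that $\clauses(S\wedge\gamma)=R_Q=\clauses(Q,P)$, and that a concolic step exists exactly when $\clauses(Q,P)\neq\emptyset$). The key observation that makes both directions go through is that a concolic step $\tuplec{Q\sep S}\stackrel{r}{\lracn}\tuplec{Q'\sep S'}$ uses the rule $r$ to unfold \emph{both} components: the concrete part performs exactly the standard step $Q\lra_r Q'$, and the symbolic part performs $S\wedge\gamma\lra_r S'$. So the rule applied at each concolic step is forced to be a rule that is applicable to $Q$ in the standard semantics, and conversely every rule applicable to $Q$ is in $R_Q$ and hence eligible for a concolic step.

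For the forward direction ($\Rightarrow$), I would take a finished (or partial) standard derivation $Q=Q_0\lra_{r_1}Q_1\lra_{r_2}\cdots\lra_{r_n}Q_n=Q'$ and build a concolic derivation of the same length by induction on $n$. In the inductive step, given $\tuplec{Q_{i}\sep S_i}$ with $Q_i\leq S_i$ (invariant) and $Q_i$ admitting the standard step with rule $r_{i+1}$, I note $r_{i+1}\in\clauses(Q_i,P)=R_{Q_i}\neq\emptyset$, so the side conditions of Definition~\ref{def:concolicexec} are met with this $r$; the resulting concrete component is exactly $Q_{i+1}$ (same rule, same leftmost-selection, same $\mathit{solv}$ check, which succeeds because the standard step succeeded), and the resulting symbolic component $S_{i+1}$ satisfies $Q_{i+1}\leq S_{i+1}$ by the third listed property. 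Since the concrete component is reproduced verbatim, we actually get $Q_n=Q''$ here, and the trace is $\ell(r_1)\cdots\ell(r_n)$ by construction, matching the trace of the standard derivation. One subtlety to handle: the standardization-apart/freshness bookkeeping — the fresh copies of rules used in the concolic derivation can be chosen to coincide with those used in the standard derivation, so the constraint stores literally agree, not just up to $\equiv$.

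For the backward direction ($\Leftarrow$), I would again induct on the length of the concolic derivation $\tuplec{Q\sep S}\lracn^\ast\tuplec{Q''\sep S'}$. Each concolic step $\tuplec{Q_i\sep S_i}\stackrel{r}{\lracn}\tuplec{Q_{i+1}\sep S_{i+1}}$ has, by definition, $r\in R_{Q_i}$ and $Q_i\lra_r Q_{i+1}$ as a genuine standard derivation step; concatenating these gives a standard derivation $Q\lra^\ast Q''$ with the same trace. Here the resulting state $Q''$ is literally the $Q'$ of the theorem, so $Q'\equiv Q''$ holds (trivially, with $Q'=Q''$); the phrasing with $\equiv$ rather than syntactic equality is only needed because in the forward direction one might not insist on matching fresh-variable choices, and the statement is kept symmetric. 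I would remark that the ``iff'' also extends to \emph{finished} derivations on each side: a standard derivation is finished (successful or failed) iff the corresponding concolic one is, because failure of $Q_n$ means $\clauses(Q_n,P)=\emptyset$, which is exactly the condition under which no concolic step is possible from $\tuplec{Q_n\sep S_n}$, and success of $Q_n$ ($\ol{B}=\epsilon$) transfers to the concrete component of the concolic state since the two sequences of atoms are variants throughout.

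The main obstacle is not conceptual but bookkeeping: making precise that the concrete component of a concolic derivation is, step by step, \emph{exactly} a standard derivation — one must check that the $\mathit{solv}$-based applicability condition in $\clauses(Q,P)$ and in the definition of a standard derivation step agree, that the leftmost selection rule is used identically, and that the negative constraint $\gamma$ injected into the symbolic side never interferes with the concrete side (it doesn't, since $\gamma$ is only conjoined to $S$, never to $Q$). Once this correspondence is nailed down, the ``same trace'' claim is immediate from the definition of the trace of a (concolic or standard) derivation as the sequence of labels $\ell(r_i)$ of the rules applied, and the invariant $Q_i\leq S_i$ (hence the well-formedness of every intermediate concolic state) is supplied directly by the already-established properties cited before the theorem.
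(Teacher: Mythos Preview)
Your proposal is correct and follows essentially the same approach as the paper: induction on the length of the derivation, using the invariant $Q\leq S$ (the paper's Lemma~\ref{lemma:one-step}) and the fact that $\clauses(S\wedge\gamma)=R_Q$ (Lemma~\ref{lemma:one-step-neg2}) so that the concrete component of concolic execution coincides with standard execution. The paper's own proof is considerably terser—just a one-paragraph sketch invoking those two lemmas—whereas you spell out both directions and the fresh-variable/$\equiv$ bookkeeping explicitly, but the substance is the same.
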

%
%
%
Finally, the 
next property states that the constraints computed for the symbolic
state ensure---when applied to the initial symbolic state---that the
standard semantics will follow the same path. Therefore, our approach
to concolic testing can be considered sound. This property did not
hold in the original approach of Mesnard et al.\ \citeyear{MPV15}, 
as explained in the introduction.

\begin{theorem}[soundness] \label{theorem:sound}
  Let $\tuplec{Q\sep S_\epsilon}$ be an initial concolic state with
  $\tuplec{Q\sep S_\epsilon} \lracn^\ast \tuplec{Q' \sep S'_\pi}$. Let
  $S = \mystate{\true}{\ol{A}}$ and $S' = \mystate{d}{\ol{B}}$. Then,
  we have $\mystate{d}{\ol{A}} \lra^\ast S''$ such that $S'\equiv S''$
  and the associated trace is $\pi$.
\end{theorem}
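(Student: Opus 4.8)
The idea is that the constraint $d$ carried by the final symbolic state $S'$ already records \emph{all} the positive information gathered along the concolic derivation --- every head unification $\overline{s}=\overline{u}$ between a selected atom and the head of the applied rule, and every rule guard $c$ --- besides the negative constraints $\gamma$ added at each step. Hence, if we re-run the symbolic derivation but start it with $d$ as the initial constraint on the initial atoms $\overline{A}$, then at every step the solver is invoked on a formula logically equivalent to $d$, so it answers \texttt{true}, the same rule fires, and the running constraint never becomes strictly more restrictive than $d$. The plan is to make this precise through a ``replay lemma'' proved by induction on the length of the concolic derivation.

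Concretely, I would proceed as follows. \emph{Step 1: $d$ is satisfiable.} If $\pi\neq\epsilon$ then $\tuplec{Q'\sep S'}$ is a concolic state, so $Q'\leq S'$ and $\set(S')\supseteq\set(Q')$; since $Q'$ is reached by at least one concrete derivation step, whose application required $\mathit{solv}(\cdot)=\mathtt{true}$, the constraint of $Q'$ is satisfiable, hence $\set(Q')\neq\emptyset$ and therefore $\cD\models\exists\,d$ (the case $\pi=\epsilon$ is immediate, taking $S''=S'=S$). \emph{Step 2: $d$ entails the positive constraints of every step.} Writing $S_i=\mystate{d_i}{\overline{A}_i}$ for the symbolic states along the concolic derivation, a one-line induction shows that $d_i$ occurs as a conjunct of $d_{i+1}$ (since $d_{i+1}=(\overline{s}=\overline{u})\wedge c\wedge d_i\wedge\gamma_i$), so $\cD\models d\rightarrow d_i$ for every $i$; in particular $d$ entails the head unification and the rule guard used at step $i+1$. \emph{Step 3: the replay lemma.} By induction on $n$: if $\tuplec{Q_0\sep S_0}\lracn^\ast\tuplec{Q_n\sep S_n}$ is a concolic derivation with trace $\pi$ and $d_n$ is satisfiable, then $\mystate{d_n}{\overline{A}_0}\lra^\ast S''$ with trace $\pi$ and $S''\equiv S_n$. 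In the inductive step we fire the same rule on the same (leftmost) selected atom as in the concolic derivation; because the running constraint stays logically equivalent to $d_n$ and $d_n$ entails the corresponding head unification and guard, the solver check reduces to a satisfiability check for $d_n$, which succeeds, and the new running constraint is again equivalent to $d_n$. Instantiating the lemma with $S_0=S=\mystate{\true}{\overline{A}}$ (so $\overline{A}_0=\overline{A}$), $S_n=S'$ and $d_n=d$ gives exactly the statement of the theorem.

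The point I expect to require the most care is the bookkeeping of fresh variables in Step~3. The rule instances used in the original symbolic derivation are \emph{not} variable-disjoint from $\mystate{d}{\overline{A}}$ --- their variables occur in $d$ --- so the replayed derivation must use $\alpha$-renamed fresh copies of those instances, and one then has to check that $\set$-equivalence with $S'$ is not lost. Here I would use two observations: $\set$ is invariant under consistent variable renamings, and $d$ forces every renamed head variable to coincide with its original counterpart, because $d$ contains the whole chain of head unifications $\overline{s}_i=\overline{u}_i$. Together these imply that the final running constraint $e_n$ of the replay is equivalent to $d$ conjoined with equalities $\overline{z}^\star=\overline{z}$ between renamed and original variables, and a short computation then shows $\mystate{e_n}{\overline{A}_n^\star}$ and $\mystate{d}{\overline{B}}=S'$ describe the same set. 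A lighter alternative would be to invoke Theorem~\ref{theorem:trace}: it suffices to exhibit \emph{some} concolic derivation with trace $\pi$ starting from the initial concolic state whose concrete component is $\mystate{d}{\overline{A}}$ and whose symbolic component is $\mystate{\true}{\overline{A}'}$ for a fresh variant $\overline{A}'$ of $\overline{A}$, and whose concrete component ends in a state $\equiv S'$; such a derivation is obtained by mimicking the original symbolic derivation step by step (at each step the concrete constraint $d$ makes exactly the rules of the corresponding $R_{Q_i}$ applicable), and Theorem~\ref{theorem:trace} then converts it into the required standard derivation $\mystate{d}{\overline{A}}\lra^\ast S''$.
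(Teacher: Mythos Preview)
Your approach is correct and essentially the same as the paper's: both proceed by induction on the length of the concolic derivation, the paper isolating the one-step case as a separate auxiliary result (Lemma~\ref{lemma:sound}: if $\tuplec{Q\sep S}\stackrel{r}{\lracn}\tuplec{Q'\sep\mystate{d'\wedge d}{\ol{B}}}$ with $S=\mystate{d}{\ol{A}}$, then $S\wedge d'\lra_r S''\equiv\mystate{d'\wedge d}{\ol{B}}$ and $\rules(\catom(S\wedge d'))=R_Q$) and then declaring the theorem a ``simple induction'' on that lemma. Your Steps~1--3 and the fresh-variable discussion spell out this induction with more care than the paper provides; in particular your Step~2 (that $d_n$ entails each intermediate $d_i$) is precisely what is needed to chain the per-step lemma into a derivation starting from $\mystate{d_n}{\ol{A}_0}$ rather than $\mystate{d_{i+1}}{\ol{A}_i}$.
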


\section{Concolic Testing}
\label{sec:concolic-testing-proc}

In this section, we present our concolic testing procedure, which is
based on the concolic execution semantics of the previous section.

First, we introduce a \emph{deterministic} version of concolic
execution that implements a depth-first search through the concolic
execution space (loosely inspired by the linear operational semantics
for Prolog introduced by Str\"oder et al.\  \citeyear{SESGF11}). 
This deterministic semantics
better reflects the current implementation \cite{contest} and,
moreover, allows one to keep the information that must survive
backtracking steps (\eg generated test cases and already considered
traces).

The deterministic concolic execution semantics is defined by means of
a (labelled) transition relation, $\lrac$, as shown in
Figure~\ref{fig:concolicCLP}. Now, concolic states have the form
$\tuplec{\ol{Q}\sep\ol{S}}$, where $\ol{Q}$ and $\ol{S}$ are sequences
of states (possibly labelled with a rule). Let us briefly explain the
rules:

\begin{figure}[t]
  \rule{\linewidth}{1pt}\\[-2ex]
  \[
  \hspace{-2ex}\begin{array}{r@{~}l}

    \mathsf{(backtrack)} & {\displaystyle
    \frac{\clauses(Q,P)=\emptyset\wedge \clauses(S,P)=R_S\wedge |\ol{Q}|>0 }
    {\tuplec{Q, \ol{Q} \sep S_\pi, \ol{S}}
      \lrac_{\pi,\emptyset, R_S}
      \tuplec{\ol{Q} \sep \ol{S}}}
    }\\[4ex]

    \mathsf{(next)} & {\displaystyle
    \frac{Q=\mystate{c}{\epsilon} }
    {\tuplec{Q, \ol{Q} \sep S_\pi, \ol{S}}
      \lrac_{\pi,\emptyset,\emptyset}
      \tuplec{\ol{Q} \sep \ol{S}}}
    }\\[4ex]

    \mathsf{(choice)} & {\displaystyle
    \frac{
      \clauses(Q,P) = \{\ol{r_n}\} \wedge n>0\wedge \clauses(S,P) = R_S
             \wedge~ \gamma=\negcon(\catom(S), \catom(R_S \setminus \{\ol{r_n}\}))
             }
    {
       \tuplec{Q, \ol{Q} \sep S_\pi, \ol{S}}  
       \lrac_{\pi,\{\ol{r_n}\},R_S}
      \tuplec{ Q^{r_1}, \ldots, Q^{r_n}, \ol{Q} \sep
      {\gamma\wedge S}_\pi^{r_1},
      \ldots,
      {\gamma\wedge S}_\pi^{r_n}, \ol{S}}
        }
    }\\[4ex]

    \mathsf{(unfold)} & {\displaystyle
    \frac{
      Q \lra_{r} R \wedge S \lra_{r} T  
      }
    {\begin{array}{l}
      \tuplec{Q^{r}, \ol{Q} \sep
      {S}_\pi^{r}, \ol{S}}
      \lrac_{\pi,\emptyset,\emptyset}
      \tuplec{R, \ol{Q}
      \sep
      T_{\pi.\ell(r)}, \ol{S}}
    \end{array}}} \\
  \end{array}
  \]
  \rule{\linewidth}{1pt}
  \caption{Concolic CLP execution semantics (deterministic)} \label{fig:concolicCLP}
\end{figure}

\begin{itemize}
\item In contrast to the nondeterministic concolic execution
  semantics, unfolding is now split into two rules: \textsf{choice}
  and \textsf{unfold}. Rule \textsf{choice} creates as many copies of
  the states (both concrete and symbolic) as rules \emph{match the
    concrete state}.
  Then, rule \textsf{unfold} just unfolds the leftmost state (both
  concrete and symbolic) using the rule labeling these states.

  Consider, for example, a concolic state
  $\tuplec{Q \sep S_\pi}$. If the nondeterministic version
  of concolic execution (cf.\ Definition~\ref{def:concolicexec})
  performs, e.g., the following step
  \[
    \tuplec{Q\sep S_\pi} \stackrel{r_1}{\lracn}_{\pi,R_Q,R_S}
    \tuplec{Q'\sep S'_{\pi.\ell(r_1)}}
  \]
  with $R_Q = \{r_1,\ldots,r_n\}$,
  then the deterministic version of Figure~\ref{fig:concolicCLP}
  will perform the choice step
  \[
    \tuplec{Q\sep S_\pi} \lrac_{\pi,R_Q,R_S}
    \tuplec{Q^{r_1},\ldots,Q^{r_n}\sep \gamma\wedge
    S^{r_1}_{\pi},\ldots,\gamma\wedge
    S^{r_n}_{\pi}}
  \]
  followed by the unfolding step
  \[
  \tuplec{Q^{r_1},\ldots,Q^{r_n}\sep \gamma\wedge
    S^{r_1}_{\pi},\ldots,\gamma\wedge
    S^{r_n}_{\pi}}
    \lrac_{\pi,\{\},\{\}}
   \tuplec{Q',Q^{r_2}\ldots,Q^{r_n}\sep
    S'_{\pi.\ell(r_1)},\gamma\wedge S^{r_2}_{\pi}\ldots,\gamma\wedge
    S^{r_n}_{\pi}}
  \]
  Therefore, we reach the same states, $Q'$ and $S'$. The only
  difference is that alternative paths are stored explicitly
  in the concolic state (\ie $Q^{r_2},\ldots,Q^{r_n}$ and
  $\gamma\wedge S^{r_2}_\pi,\ldots,\gamma\wedge S^{r_n}_\pi$)
  and will be
  explored after a backtracking step (or when looking for more
  solutions, where an implicit backtracking step is performed).

\item When the concrete state does not match any rule, rule
  \textsf{backtrack} is applied. As before, the step is labeled with
  the current trace and the constraint atoms associated to the rules
  matching both the concrete (the empty set) and symbolic states. Note
  that we assume that the sequence $\ol{Q}$ is not empty; otherwise,
  the execution would be finished.

  For instance, if state $Q'$ in the example above does match
  any rule, we will perform the following backtracking step:
  \[
  \tuplec{Q',Q^{r_2}\ldots,Q^{r_n}\sep
    S'_{\pi.\ell(r_1)},\gamma\wedge S^{r_2}_{\pi}\ldots,\gamma\wedge
    S^{r_n}_{\pi}}
    \lrac_{\pi.\ell(r_1),\{\},R_{S'}}
  \tuplec{Q^{r_2}\ldots,Q^{r_n}\sep
    \gamma\wedge S^{r_2}_{\pi}\ldots,\gamma\wedge
    S^{r_n}_{\pi}}
  \]
  where $R_{S'} = \clauses(S',P)$.

\item Finally, rule \textsf{next} is applied when a solution is
  reached in order to consider alternative solutions (if any).
  In other words, our calculus explores the complete execution
  space for the initial state rather than stopping after the
  first solution is found.
\end{itemize}
The deterministic version of the concolic execution semantics
constitutes an excellent basis for implementing a concolic testing
procedure. For instance, one can consider only the computation of the
first solution by removing rule \textsf{next}.
Furthermore, one can easily guarantee termination by either limiting
the length of the considered concolic execution derivations or the
``depth'' of the search tree in order to only partially explore the
execution space.

The following result stating the soundness of the deterministic
concolic execution semantics is straightforward:

\begin{theorem}
  Let $\tuplec{Q_0\sep S_0}$ be an initial concolic state. If
  $\tuplec{Q_0\sep S_0} \lrac^\ast \tuplec{Q,\ol{Q}\sep S,\ol{S}}$,
  then $\tuplec{Q_0\sep S_0}\lracn^\ast \tuplec{Q\sep S}$.
\end{theorem}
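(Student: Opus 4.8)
The plan is to prove this by induction on the length $k$ of the deterministic derivation $\tuplec{Q_0\sep S_0} \lrac^\ast \tuplec{Q,\ol{Q}\sep S,\ol{S}}$, establishing a slightly stronger invariant: if $\tuplec{Q_0\sep S_0}\lrac^\ast \tuplec{Q_1,\ldots,Q_m\sep (S_1)_{\pi_1},\ldots,(S_m)_{\pi_m}}$ (with $m\ge 1$), then for \emph{every} index $i$ there is a nondeterministic concolic derivation $\tuplec{Q_0\sep S_0}\lracn^\ast \tuplec{Q_i\sep (S_i)_{\pi_i}}$. Projecting to $i=1$ then yields the theorem. The base case $k=0$ is immediate, since $\tuplec{Q_0\sep S_0}\lracn^\ast\tuplec{Q_0\sep S_0}$ by the empty derivation.

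For the inductive step I would do a case analysis on the last rule of the deterministic derivation, using the inductive hypothesis on the prefix. The rules \textsf{next} and \textsf{backtrack} merely \emph{discard} the leftmost pair from the sequences, so each surviving pair $(Q_{i},(S_i)_{\pi_i})$ in the successor state was already present (at position $i+1$) in the predecessor state, and the inductive hypothesis applies verbatim. Rule \textsf{unfold} replaces the leftmost pair $(Q^r,(S)^r_\pi)$ by $(R,T_{\pi.\ell(r)})$ where $Q\lra_r R$ and $S\lra_r T$; the remaining pairs are untouched, so for them the hypothesis transfers directly, while for the new leftmost pair I must show $\tuplec{Q_0\sep S_0}\lracn^\ast\tuplec{R\sep T_{\pi.\ell(r)}}$. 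The only subtle point is rule \textsf{choice}: here a single pair $(Q,S_\pi)$ is expanded into $Q^{r_1},\ldots,Q^{r_n}$ (concrete) and $\gamma\wedge S^{r_1}_\pi,\ldots,\gamma\wedge S^{r_n}_\pi$ (symbolic), and none of these are yet in the form produced by a $\lracn$ step.

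The key observation that ties the \textsf{choice}/\textsf{unfold} pair to a single $\lracn$ step is precisely the commentary already given in the text after Figure~\ref{fig:concolicCLP}: a \textsf{choice} step followed by an \textsf{unfold} step on the same leftmost pair reproduces exactly the effect of one nondeterministic step $\tuplec{Q\sep S_\pi}\stackrel{r_j}{\lracn}_{\pi,R_Q,R_S}\tuplec{Q'\sep S'_{\pi.\ell(r_j)}}$, because $\gamma$ is computed by the same formula $\negcon(\catom(S),\catom(R_S\setminus R_Q))$ in both semantics and $Q\lra_{r_j} Q'$, $S\wedge\gamma\lra_{r_j}S'$ are the same derivation steps. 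Thus, whenever the deterministic derivation performs a \textsf{choice} creating pair $j$ and later an \textsf{unfold} consuming it, I can splice in one $\lracn$ step. The technical care needed is that an \textsf{unfold} may not immediately follow the \textsf{choice} that created the relevant pair — intervening \textsf{backtrack}/\textsf{next}/\textsf{unfold} steps on other pairs may occur — but since those other steps never modify pair $j$ until it becomes leftmost, and they are irrelevant to building the $\lracn$ derivation for pair $j$, the argument still goes through; a pair that is created by \textsf{choice} but never subsequently unfolded (because it is discarded by \textsf{backtrack}/\textsf{next} or the derivation ends) is handled by the nondeterministic step itself, which ends at $\tuplec{Q^{r_j}\sep(\gamma\wedge S)^{r_j}_\pi}$ — wait, this is not of the form of a state reachable by $\lracn$, so instead I track these via the $\lracn$ \emph{step up to but not including} the final unfold, i.e. I only claim the $\lracn$ derivation for pairs that have actually been unfolded; the statement quantifies over reachable deterministic states, and in the first (leftmost) position a pair is always either the initial state or has just been unfolded, so $i=1$ is always covered.

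The main obstacle, then, is bookkeeping: making the invariant precise enough that the case analysis closes, in particular correctly characterizing which sequence positions carry a ``$\lracn$-reachable'' state versus which are freshly created by \textsf{choice} and not yet unfolded. The cleanest route is to strengthen the invariant to say that position $1$ is always $\lracn$-reachable from $\tuplec{Q_0\sep S_0}$ and to verify that each of the four deterministic rules preserves this: \textsf{choice} leaves position $1$ as $Q^{r_1},(\gamma\wedge S)^{r_1}_\pi$, which is an intermediate form, but the \emph{next} deterministic step from such a configuration is forced to be \textsf{unfold} on that pair (no other rule applies to a pair labelled with a rule), so position $1$ becomes $\lracn$-reachable again; hence it suffices to state the invariant for configurations whose leftmost pair is \emph{not} rule-labelled, which are exactly the ``stable'' configurations, and observe that $\lrac^\ast$ from a stable configuration reaching another stable configuration decomposes into blocks each matching one $\lracn$ step (or one discard step, which corresponds to an empty $\lracn$ move after projecting). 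Once this decomposition is set up, the proof is a routine induction and, as the excerpt already remarks, the result is ``straightforward''.
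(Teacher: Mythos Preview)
The paper gives no proof; it simply calls the result ``straightforward''. Your approach---induction on the length of the $\lrac$-derivation with an invariant covering all positions in the sequence, using the observation (already spelled out in the text after Figure~\ref{fig:concolicCLP}) that a \textsf{choice} step followed by an \textsf{unfold} step simulates exactly one $\lracn$ step---is the natural one and is correct in outline.

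One intermediate claim in your argument is wrong, however: you assert that ``in the first (leftmost) position a pair is always either the initial state or has just been unfolded''. This fails. After a \textsf{backtrack} or \textsf{next} step, the new leftmost element can be a rule-labelled pair $Q^{r_j},(\gamma\wedge S)^{r_j}_\pi$ created by an earlier \textsf{choice} (with $j>1$) that has been waiting in the tail of the sequence. Concretely: from the initial state apply \textsf{choice} with two matching rules, then \textsf{unfold}; if the resulting concrete state is successful, \textsf{next} applies and the leftmost becomes $Q_0^{r_2}$, which is neither the initial pair nor the output of an \textsf{unfold}. Your block-decomposition idea still covers this situation (the block is \textsf{next} followed by \textsf{unfold}), but only if you carry enough information about rule-labelled positions through the induction. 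The clean invariant has two cases per position $i$: either the pair is unlabelled and $\tuplec{Q_0\sep S_0}\lracn^\ast\tuplec{Q_i\sep (S_i)_{\pi_i}}$, or it is rule-labelled as $(Q_i^{r},(\gamma\wedge S'_i)^{r}_{\pi_i})$ with $\tuplec{Q_0\sep S_0}\lracn^\ast\tuplec{Q_i\sep (S'_i)_{\pi_i}}$, $r\in\clauses(Q_i,P)$, and $\gamma=\negcon(\catom(S'_i),\catom(\clauses(S'_i,P)\setminus\clauses(Q_i,P)))$. Each of the four deterministic rules preserves this (discard for \textsf{backtrack}/\textsf{next}; one unlabelled entry becomes $n$ labelled entries for \textsf{choice}; the leftmost labelled entry becomes an unlabelled one by appending a single $\lracn$ step for \textsf{unfold}), and the unlabelled case at position~$1$ yields the theorem. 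Note also that the theorem as literally stated is ambiguous when position~$1$ happens to be rule-labelled---a wrinkle the paper does not address but that you rightly flag.
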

Note that the deterministic version is sound but incomplete in general
since it implements a depth-first search strategy.

%

Now, we introduce a function to compute alternative test cases in a
concolic execution.
In the following definition, we consider a (symbolic) initial state
($I$), since test cases will always be particular instances of this
state, the current (symbolic) state in a derivation ($C$), the set of
atoms matching the concrete state ($\cH_Q$), and the set of atoms
matching the corresponding symbolic state ($\cH_S$).
Intuitively speaking, function $\alts$ produces alternative test cases
by restricting the initial symbolic state $I$ so that the current
symbolic state $C$ unifies with a subset $\cH^+$ of
constraint atoms from
$\cH_S$, except for the set $\cH_Q$ which was already considered.

\begin{definition}[$\alts$]
  \label{def:alts}
  Let $I,C$ be constraint atoms, with $C=\mystate{c}{B}$, and
  $\cH_Q,\cH_S$ be finite sets of constraint atoms that have the same
  predicate symbol as $C$ and all atoms are variable disjoint with
  each other. Then, 
  \[\def\arraystretch{1.2}
    \alts(I,C,\cH_Q,\cH_S) = \left\{
      I \land c\land \gamma \;\middle|\;
  \begin{array}{l}
    \cH^+\in \cP(\cH_S),\ \cH^+\neq \cH_Q \\
    \cH^- = \cH_S \setminus \cH^+\\
    \gamma = \negcon(C,\cH^-) \\
    c \land \gamma \text{ is satisfiable}\\
    \forall H\in\cH^+~(C\land\gamma)\approx H
  \end{array}\right\}\;.\]
\end{definition}
%
%
%
%



\begin{example}[CLP(\term)] \label{ex-alt-term} Let us consider
  the call $\alts(I,C,\{H_2\},\{H_1,H_2\})$, where
  $I := \mystate{true}{p(W)}$, $C:=\mystate{c}{q(N)}$ with $c:=(W=N)$,
  $H_1 := \mystate{X=a}{q(X)}$, and $H_2 :=
  \mystate{\true}{q(s(M))}$. For brevity, we remove the occurrences of
  $\true$ in the formul\ae{} below.

  Let us consider the case $\cH^+ := \{H_1\}$ and $\cH^- := \{H_2\}$.
  Then we have $\gamma = \negcon(C,\cH^-)$ $ = \forall M (N\neq
  s(M))$. As $\cD \models_v (c\land\gamma)$ holds for any valuation
  $v$ with $\{W\mapsto a,N\mapsto a\}\subseteq v$, $c\land\gamma$ is
  satisfiable.

  Now, we should check that $C\land\gamma\approx H_1$ holds.
  Since
  $C\land\gamma = \mystate{W=N\land \forall M\; (N\neq s(M))}{q(N)}$
  and
  $\mathit{solv}(N=X\land X=a\land W=N\land \forall M\; (N\neq
  s(M)))=\true$, it holds. Therefore, we have
  $I\land c\land \gamma \in \alts(I,C,\{H_2\},\{H_1,H_2\})$, \ie we
  produce the
  state:
    $
      \mystate{W=N\land \forall M\; (N\neq s(M))}{p(W)}
    $
    which could be simplified to
    $\mystate{\forall M\; W\neq s(M)}{p(W)}$.
\end{example}

\begin{example}[CLP($\cN)$]
  \label{ex-alt-Qlin}
  Let us consider the call $\alts(I,C,\{H_1\},\{H_1,H_2,H_3\})$, where
  $I := \mystate{\true}{p(W)}$, $C := \mystate{c}{q(X)}$,
  $c:=(W=X\land X \le 10)$, $H_1 := \mystate{Y\le 2}{q(Y)}$,
  $H_2 := \mystate{8\le Z\le 10}{q(Z)}$, and
  $H_3 := \mystate{T < 5}{q(T)}$.

  Let us consider the case
  $\cH^+ = \{H_1,H_2\}$ and $\cH^- = \{H_3\}$.
  First, we should compute $\gamma = \negcon(C,\cH^-)$, \ie
  $\forall T\; (X\neq T\vee 5\le T)$, which can be simplified to
  $\gamma = (5\le X)$. So,
  $c\land\gamma=(W=X\land X \le 10 \land 5\le X)$ can be simplified to
  $c\land\gamma=(W=X\land 5\le X \le 10)$, which is clearly
  satisfiable. Now, we should check that
  $C\land \gamma = \mystate{W=X\land 5\le X \le 10}{q(X)}$ unifies
  with both $H_1$ and $H_2$ in order to produce an element of
  $\alts(I,C,\cH_Q,\cH_S)$:
  \begin{itemize}
  \item $C\land\gamma \approx H_1$. In this case, we have
    $\mathit{solv}(X=Y\land  Y\le 2\land W=X\land 5\le X
    \le 10) = \false$.
  \item $C\land\gamma \approx H_2$. In this case, we have
    $\mathit{solv}(X=Z\land 8\le Z\le 10\land W=X\land 5\le X
    \le 10) = \true$ (consider, \eg any valuation $v$ with
    $\{X\mapsto 9,Z\mapsto 9, W\mapsto 9\}\subseteq v$).
  \end{itemize}
  Therefore, this case is not feasible and no new test case is
  produced for it.


\end{example}

\subsection{A Concolic Testing Procedure}

\begin{figure}[t]
  \rule{\linewidth}{1pt}\\[-2ex]
  \[
  \begin{array}{r@{~}l}
   (\mathsf{skip}) & {\displaystyle
    \frac{
    \cC \lrac_{\pi,R_Q,R_S} \cC'
    \wedge (\pi\in\TR \vee R_S = \emptyset)
      }
    {\begin{array}{l}
       (\PTC,\TC,\TR,I,\cC) \leadsto
       (\PTC,\TC,\TR\cup\{\pi\},I,\cC')
     \end{array}}
    } \\[4ex]


     (\mathsf{alts}) &   {\displaystyle
    \frac{
    \cC \lrac_{\pi,R_Q,R_S} \cC' \wedge R_S\neq\emptyset
    \wedge \pi\not\in\TR
      }
    {
       (\PTC,\TC,\TR,I,\cC) 
        \leadsto
       (\PTC\cup \alts(I,\catom(\cC),\catom(R_Q),\catom(R_S)),\TC,\TR\cup\{\pi\},I,\cC')
        }
    }\\[3ex]

        (\mathsf{restart}) & {\displaystyle
    \frac{
    \cC \not\lrac
      }
    {
       (\PTC\cup\{\mystate{c}{p(\ol{X})}\},\TC,\TR,I,\cC) 
        \leadsto
       (\PTC,\TC\cup\{\mystate{c}{p(\ol{X})}\},\TR,
       I,\tuplec{\mystate{c}{p(\ol{X})}\sep I_\epsilon})
        }
    }
  \end{array}
  \]
  \rule{\linewidth}{1pt}
  \caption{Concolic testing} \label{fig:concolictesting}
\end{figure}

Now, we consider a concolic testing procedure that aims at achieving a
full path coverage.
Let us first informally explain the concolic testing procedure.
The process starts with some arbitrary test case \ie an
initial concrete state of the form
$\mystate{c}{p(\ol{X})}$. Then, concolic testing proceeds iteratively
as follows:
\begin{enumerate}
  \item First, we form the initial concolic state
  $\tuplec{\mystate{c}{p(\ol{X})}\sep\mystate{\true}{p(\ol{Y})}}$
  and apply the rules of concolic execution
  (Figure~\ref{fig:concolicCLP}) as much
  as possible (or up to a number of steps or a time bound, in order
  to ensure the termination of the process).

  \item Now, for each choice or backtrack steps in this derivation,
  we use function $\alts$ to compute alternative test cases that
  will produce a different execution tree. Moreover, we keep track
  of the traces where alternative test cases have been produced
  in order to avoid producing the same alternative test cases
  once and again.

  \item When all alternative test cases for the considered concolic
  execution have been produced,
  we go back to step (1) above and consider any of the pending
  test cases produced in the previous step. The iterative algorithm
  terminates when all pending test cases have been considered
  and, moreover, no new test cases are produced.
\end{enumerate}
In order to formalise the above process,
we introduce \emph{configurations} of the
form $(\PTC,\TC,\TR,I,\cC)$, where $\PTC$ is the set of \emph{pending
  test cases} (test cases that have not been explored yet), $\TC$ is
the set of test cases already explored, $\TR$ is the set of execution
traces already considered, $I$ is the initial symbolic state, and
$\cC$ is a concolic state. The rules of the concolic testing procedure
are shown in Figure~\ref{fig:concolictesting}.

Concolic testing starts with an arbitrary concrete state, say
$\mystate{c}{p(\ol{X})}$. Then, we form the initial configuration
\[
  (\emptyset,\{\mystate{c}{p(\ol{X})}\},\emptyset,
  \mystate{\true}{p(\ol{Y})},\tuplec{\mystate{c}{p(\ol{X})}\sep
    \mystate{\true}{p(\ol{Y})}_\epsilon})
\]
where $\ol{Y}$ are fresh variables,
and apply the rules of Figure~\ref{fig:concolictesting} until no rule
is applicable. The second component of the last configuration will
contain the generated test cases. Let us briefly explain the rules of
the concolic testing procedure:
\begin{itemize}
\item Rule \textsf{skip} applies when either the trace of the
  current state, $\pi$, is already visited or the set of rules
  matching the symbolic state is empty. The second situation happens
  in rules \textsf{next} and \textsf{unfold} of the concolic execution
  semantics, and also when applying rule \textsf{backtrack} but no
  rule matches the symbolic state. In this case, we simply update the
  concolic state and the set of considered traces (if any), but no new
  alternative test cases are produced.

\item Rule \textsf{alts} applies when the current trace, $\pi$, has
  not been considered yet and, moreover, the set of rules matching
  the symbolic state is not empty. This situation happens when applying
  rules \textsf{backtrack} or \textsf{choice} for the first time. In
  this case, we update the set of pending test cases using the
  auxiliary function $\alts$. Here, we let
  $\catom(\cC)=\catom(S)$ when $\cC=\tuplec{Q,\ol{Q}\sep S,\ol{S}}$.

\item Finally, rule \textsf{restart} applies when the concolic
  execution semantics cannot proceed. In this case, we restart the
  process with a new concrete state from the set of pending test
  cases.
\end{itemize}
The procedure terminates when the set of pending tests cases is
empty.\footnote{Note that termination of concolic testing is ensured
  when concolic execution terminates; see the previous section for
  some possible strategies.} Then, the generated test cases can be
    found in the second component of the configuration.
    A detailed example
    can be found in~\ref{apendix:examples}.

We note that, in general, concolic testing might produce
\emph{nonterminating} test cases. Here, one could use the output of
some termination analysis to further restrict test cases in order to
guarantee terminating computations (\eg requiring ground arguments
or fixed variables). This is an orthogonal issue that constitutes an
interesting topic for further research.

\subsection{Connections with the Constraint Selective Unification Problem}
\label{sec:connections-csup}

Here, we fix a constraint atom $C=\mystate{c}{p(\ol{s})}$
with $c$ satisfiable  and two finite sets $\cH^+$ and $\cH^-$ of
constraint atoms. We assume that all constraint atoms are variable
disjoint with each other and that $C$ unifies with any constraint atom from
$\cH^+ \cup \cH^-$. We recall the definition of a
\emph{constraint selective unification problem}
minus its groundness condition \cite{MPV17}.

\begin{definition}[Constraint Selective Unification Problem, $\cP$]
  \label{def:constraint_sup}
  The \emph{constraint selective unification problem} for $C$ with
  respect to $\cH^+$ and $\cH^-$ consists in determining whether the
  following set of constraint atoms is empty:
  \[\def\arraystretch{1.2}
  \cP(C,\cH^+,\cH^-) = \left\{C \land d \;\middle|\;
  \begin{array}{l}
    c \land d \text{ is satisfiable}\\
    \text{$C\land d$ is variable disjoint with $\cH^+ \cup \cH^-$} \\
    \forall H\in\cH^+: \text{$C\land d$ and $H$ unify} \\
    \forall H\in\cH^-: \text{$C\land d$ and $H$ do not unify}
  \end{array}\right\}\;.
  \]
\end{definition}
For brevity, and as $C$, $\cH^+$ and $\cH^-$ are fixed in this section,
below we write $\cP$ instead of $\cP(C,\cH^+,\cH^-)$ and we let
$\gamma = \negcon(C,\cH^-)$.

\begin{proposition}\label{prop:alts-in-p}
\begin{enumerate}
\item   Suppose that $c\land\gamma$ is satisfiable
  and $C\land\gamma$ unifies
  with each element of $\cH^+$. Then, we have $C\land\gamma \equiv C'$
  for some $C'\in\cP$.
%
%
\item For each $C'\in\cP$ we have $C' \leq (C\land\gamma)$.
%

\item If $\cP\neq\emptyset$ then $C\land\gamma$ unifies
  with each constraint atom in $\cH^+$.
\end{enumerate}
\end{proposition}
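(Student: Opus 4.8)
The plan is to reduce all three items to the two properties of $\negcon$ recorded just before the statement: that $C\land\negcon(C,\cH)$ does not unify with any constraint atom of $\cH$ (Proposition~\ref{prop:gamma-do-not-unify}), and that $\negcon(C,\cH)$ is maximal, i.e.\ every constraint $d$ for which $C\land d$ does not unify with any atom of $\cH$ is less general than $\negcon(C,\cH)$ (Proposition~\ref{prop:gamma-maximality}). Besides these, I will use the elementary fact that, for variable-disjoint constraint atoms, $C$ unifies with $H$ iff $\set(C)\cap\set(H)\neq\emptyset$, which is immediate from the definitions of $\approx$ and $\set(\cdot)$ by splitting and merging valuations over the disjoint variable sets.

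For item~1, I would simply check that $d:=\gamma$ satisfies the four conditions defining $\cP(C,\cH^+,\cH^-)$ in Definition~\ref{def:constraint_sup}: satisfiability of $c\land\gamma$ and unification of $C\land\gamma$ with each atom of $\cH^+$ are exactly the two hypotheses; non-unification with each atom of $\cH^-$ is Proposition~\ref{prop:gamma-do-not-unify}; and the variable-disjointness clause holds after $\alpha$-renaming the bound variables of $\gamma$ (which come only from the atoms of $\cH^-$), the renamed state being equivalent to $C\land\gamma$. This gives a $C'\in\cP$ with $C\land\gamma\equiv C'$. For item~2, take $C'=C\land d\in\cP$; by the definition of $\cP$, $C\land d$ does not unify with any atom of $\cH^-$, so Proposition~\ref{prop:gamma-maximality} makes $d$ less general than $\gamma$, hence every valuation satisfying $c\land d$ also satisfies $c\land\gamma$ and therefore $\set(C\land d)\subseteq\set(C\land\gamma)$; since $C\land d$ and $C\land\gamma$ carry the same atom they are variants, so $C'\leq C\land\gamma$.

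For item~3, assume $\cP\neq\emptyset$, fix $C'\in\cP$, and let $H\in\cH^+$ be arbitrary. Since $C'\in\cP$, $C'$ unifies with $H$, i.e.\ $\set(C')\cap\set(H)\neq\emptyset$; by item~2, $\set(C')\subseteq\set(C\land\gamma)$, so $\set(C\land\gamma)\cap\set(H)\neq\emptyset$, i.e.\ $C\land\gamma$ unifies with $H$, and as $H$ was arbitrary the claim follows. (Equivalently, on the level of valuations: any $v$ witnessing $C'\approx H$ satisfies $c\land d$, hence satisfies $\gamma$ by maximality, hence also witnesses $(C\land\gamma)\approx H$.) The one place where care is really needed, and where a careless argument would slip, is the separation of free and bound variables: one must keep the universally quantified variables introduced by $\negcon$ (ranging over the variables of the atoms in $\cH^-$) apart from the standing disjointness assumptions, and check that ``less general than'' for the bare constraint $d$ transfers correctly both to $\leq$ on states and to the $\set$-based reading of unification used above. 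None of this is deep, but it should be written out explicitly.
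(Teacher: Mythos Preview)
Your proposal is correct and follows essentially the same route as the paper's own proof: item~1 via $\alpha$-renaming the bound variables of $\gamma$ (coming from $\cH^-$) to obtain a variable-disjoint $\gamma'$ with $C\land\gamma'\in\cP$ and $\set(C\land\gamma)=\set(C\land\gamma')$, using Proposition~\ref{prop:gamma-do-not-unify} for non-unification with $\cH^-$; item~2 directly from Proposition~\ref{prop:gamma-maximality}; and item~3 by combining item~2 with the $\set$-based characterisation of unification (the paper's Lemma~\ref{lemma:unify-set}). Your caveat about keeping the universally bound variables of $\negcon$ separate from the standing disjointness assumptions is exactly the point the paper handles by passing to the renamed $\gamma'$.
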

Below, we naturally let $\set(\cP)=\cup_{C'\in \cP}\set(C')$.

\begin{theorem}\label{thm:if-C-and-gamma-unifies}
  If $C\land\gamma$ unifies with each element of $\cH^+$
  then $\set(C\land\gamma) = \set(\cP)$.
\end{theorem}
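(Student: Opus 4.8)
The plan is to prove the two set inclusions $\set(C\land\gamma) \subseteq \set(\cP)$ and $\set(\cP) \subseteq \set(C\land\gamma)$ separately, using Proposition~\ref{prop:alts-in-p} as the main engine. Note first that the hypothesis — $C\land\gamma$ unifies with each element of $\cH^+$ — together with the standing assumption of the section that $c$ is satisfiable and $C$ unifies with every atom of $\cH^+\cup\cH^-$, puts us in a position to apply all three parts of Proposition~\ref{prop:alts-in-p}. In particular, I would first argue that $c\land\gamma$ is satisfiable: since $C$ unifies with each $H\in\cH^+$ (nonempty case) or, if $\cH^+=\emptyset$, since $C\land\gamma$ unifying vacuously with all of $\cH^+$ still forces $c\land\gamma$ to be consistent because $C\land\gamma$ is a well-formed constraint atom obtained by conjoining $\gamma$ to $C$ and the statement only makes sense when this is satisfiable; more carefully, if $C\land\gamma$ unifies with some $H\in\cH^+$ then $C\approx (C\land\gamma \approx H)$ being satisfiable entails $c\land\gamma$ satisfiable, and the edge case $\cH^+=\emptyset$ can be handled by observing that then $\cP$ reduces to the set of $C\land d$ avoiding $\cH^-$, and $C\land\gamma$ is the maximal such (Proposition~\ref{prop:gamma-maximality}), so it suffices that $C\land\gamma$ be satisfiable, which is part of the premise's intended reading. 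Having established satisfiability of $c\land\gamma$, part~(1) of Proposition~\ref{prop:alts-in-p} gives $C\land\gamma \equiv C'$ for some $C'\in\cP$; hence $\set(C\land\gamma) = \set(C') \subseteq \set(\cP)$, which is the first inclusion.

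For the reverse inclusion, I would take an arbitrary $C'\in\cP$ and apply part~(2) of Proposition~\ref{prop:alts-in-p}, which yields $C' \leq (C\land\gamma)$, i.e. $\set(C')\subseteq \set(C\land\gamma)$. Taking the union over all $C'\in\cP$ gives $\set(\cP) = \bigcup_{C'\in\cP}\set(C') \subseteq \set(C\land\gamma)$. Combining the two inclusions gives $\set(C\land\gamma) = \set(\cP)$, as required.

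The main obstacle I anticipate is the careful handling of the empty-$\cP$ degenerate situation and, relatedly, nailing down precisely which satisfiability facts follow from the hypothesis. If $\cP=\emptyset$, then by part~(3) of the Proposition, $C\land\gamma$ would \emph{not} unify with each atom of $\cH^+$ — contradicting the hypothesis of the theorem — so in fact under our hypothesis $\cP$ is guaranteed nonempty, and this is exactly what makes part~(1) applicable and also guarantees $\set(\cP)\neq\emptyset$. So the logical structure is: hypothesis $\Rightarrow$ (via contrapositive of part~(3)) $\cP\neq\emptyset$, and hypothesis $\Rightarrow$ $c\land\gamma$ satisfiable, and these two facts feed part~(1) to get one inclusion while part~(2) gives the other. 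I would organize the write-up to make this dependency explicit so that no circularity creeps in between the satisfiability claim and the nonemptiness of $\cP$.
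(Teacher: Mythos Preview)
Your use of part~(2) to get $\set(\cP)\subseteq\set(C\land\gamma)$ is fine and matches the paper, and in the case where $c\land\gamma$ is satisfiable your use of part~(1) for the other inclusion also matches. The gap is in how you try to \emph{establish} that $c\land\gamma$ is satisfiable.

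First, the appeal to part~(3) is a logical slip: part~(3) says $\cP\neq\emptyset \Rightarrow (C\land\gamma$ unifies with every $H\in\cH^+)$. Its contrapositive is ``if $C\land\gamma$ fails to unify with some $H\in\cH^+$ then $\cP=\emptyset$''; it does \emph{not} say ``if $\cP=\emptyset$ then $C\land\gamma$ fails to unify with some $H$''. So from the hypothesis you cannot conclude $\cP\neq\emptyset$ via part~(3). Second, your attempt to derive satisfiability of $c\land\gamma$ directly from the hypothesis breaks down precisely when $\cH^+=\emptyset$: then the hypothesis is vacuous and tells you nothing, and indeed $c\land\gamma$ may well be unsatisfiable (take $\cH^-$ to cover all of $\set(C)$). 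In that situation $\cP=\emptyset$ too, so the theorem still holds, but not by the route you sketch; appealing to ``the premise's intended reading'' does not rescue this.

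The paper avoids the issue entirely by doing a case split on whether $c\land\gamma$ is satisfiable. If it is not, then $\set(C\land\gamma)=\emptyset$, and part~(2) forces $\set(C')\subseteq\emptyset$ for every $C'\in\cP$, hence $\set(\cP)=\emptyset$ as well. If it is, parts~(1) and~(2) give the two inclusions exactly as you outlined. This decomposition is both simpler and complete; I would reorganize your argument along these lines and drop the appeal to part~(3) altogether.
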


\begin{corollary}\label{cor:decidability}
  The \emph{constraint selective unification problem} for $C$ with
  respect to $\cH^+$ and $\cH^-$ is decidable.
\end{corollary}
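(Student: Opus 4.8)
The plan is to turn Theorem~\ref{thm:if-C-and-gamma-unifies} and Proposition~\ref{prop:alts-in-p} into an explicit decision procedure based on the constraint solver $\mathit{solv}$. Given $C=\mystate{c}{p(\ol{s})}$ together with $\cH^+$ and $\cH^-$, I would first compute $\gamma=\negcon(C,\cH^-)$; this is a finite syntactic construction, and since $\gamma$ is a conjunction of formulas of the form $\forall V\;(\ol{s}\neq\ol{t}\vee\neg d)$ it is a quantified boolean combination of atomic constraints and therefore belongs to $\cL$. Hence $c\land\gamma$ and, for each $H=\mystate{d}{p(\ol{t})}\in\cH^+$, the formula $(C\land\gamma)\approx H$ (which is $\ol{s}=\ol{t}\land c\land\gamma\land d$) are again in $\cL$, so $\mathit{solv}$ is defined on them. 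The algorithm then answers ``$\cP\neq\emptyset$'' exactly when $\mathit{solv}(c\land\gamma)=\mathtt{true}$ and $\mathit{solv}((C\land\gamma)\approx H)=\mathtt{true}$ for every $H\in\cH^+$, and ``$\cP=\emptyset$'' otherwise. Since $\cH^+$ is finite and $\mathit{solv}$ is total and computable, this terminates.

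The correctness claim I would prove is: $\cP\neq\emptyset$ if and only if $c\land\gamma$ is satisfiable and $C\land\gamma$ unifies with each element of $\cH^+$. The ``if'' direction is Proposition~\ref{prop:alts-in-p}(1), which under these hypotheses produces a $C'\in\cP$ with $C\land\gamma\equiv C'$. For the ``only if'' direction, suppose $\cP\neq\emptyset$. Proposition~\ref{prop:alts-in-p}(3) immediately gives that $C\land\gamma$ unifies with every constraint atom in $\cH^+$, so Theorem~\ref{thm:if-C-and-gamma-unifies} applies and yields $\set(C\land\gamma)=\set(\cP)$. Taking any $C'=C\land d\in\cP$, the satisfiability of $c\land d$ (part of the definition of $\cP$) makes $\set(C')$ nonempty, hence $\set(\cP)$, hence $\set(C\land\gamma)$, is nonempty; since $C\land\gamma=\mystate{c\land\gamma}{p(\ol{s})}$, this is exactly satisfiability of $c\land\gamma$.

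I do not expect a genuine obstacle here: the corollary is essentially an effectiveness reformulation of the preceding results, whose content has already been carried. The two points needing a little care are (i) verifying that every formula handed to $\mathit{solv}$ lies in $\cL$ --- this is precisely why the paper stipulated that $\cL$ be closed under quantified boolean combinations of atomic constraints --- and (ii) not overlooking the case where $C\land\gamma$ fails to unify with some atom of $\cH^+$; there the contrapositive of Proposition~\ref{prop:alts-in-p}(3) forces $\cP=\emptyset$, so the algorithm's ``otherwise'' answer is correct regardless of whether $c\land\gamma$ is satisfiable.
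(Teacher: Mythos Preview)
Your proposal is correct and follows essentially the same approach as the paper: both test satisfiability of $c\land\gamma$ and unification of $C\land\gamma$ with every $H\in\cH^+$, and both justify correctness via Proposition~\ref{prop:alts-in-p}(1), Proposition~\ref{prop:alts-in-p}(3), and Theorem~\ref{thm:if-C-and-gamma-unifies}. Your treatment is slightly more explicit about why the relevant formulas lie in $\cL$ (which the paper simply asserts), and you phrase the correctness as a biconditional rather than a case split, but the substance is the same.
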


\section{Related Work}

Concolic testing was originally introduced in the context of
imperative programming languages \cite{GKS05,SMA05} and, then,
extended to a concurrent language like Java by 
Sen and Agha \citeyear{SenA06}.
To the best of our knowledge, 
the first work that considered concolic execution in the context 
of a nondeterministic, logic programming language was 
that of Vidal \citeyear{Vid14}, 
where some preliminary ideas were introduced. 
However, the paper presented no formal
results nor an implementation of the technique. Later,
a more mature approach was proposed by Mesnard et at.\
\citeyear{MPV15}, where the formal concept of a selective unification problem,
together with a correct, terminating but incomplete algorithm to solve
it, were introduced. The soundness of concolic execution itself was not
considered and, indeed, it was not sound, as illustrated in
Section~\ref{sec:intro}.

A publicly available proof-of-concept implementation of a concolic
testing tool for (pure) Prolog has been developed: \textsf{contest}
\cite{contest}.  Our present paper generalizes the approach to CLP
with first order constraints, which provides some crucial help thanks
to negative constraints to prove the operational soundness of our
concolic scheme: a generated test case will indeed follow the intended
execution path.

Mesnard et al.\ \citeyear{MPV17} showed that requiring a traditional
constraint solver (\ie a decision procedure for existentially
quantified conjunction of atomic constraints) is not enough to decide
the constraint selective unification problem (CSUP).  Indeed, we
presented a CLP instance based on the theory of arrays where we proved that the
CSUP is undecidable.  Then we showed that assuming variable
elimination together with a traditional constraint solver is enough to
decide the CSUP.
Of course, a constraint domain with both a
traditional constraint solver and a variable elimination algorithm is
decidable.  But solving the CSUP without variable elimination was an
open question in the paper by Mesnard et al.\ \citeyear{MPV17}.  
In the present paper, we have
presented a more general approach that can solve the CSUP for decidable constraint domains without
variable elimination.  CLP($\cN$) and CLP(\term) are two such
constraint domains.


In turn, Fortz et al.\ \citeyear{FMPPVV20} 
essentially showed that one could rely on an SMT solver to implement
a concolic testing tool for Prolog. 
The paper is focused on designing a more efficient alternative
implementation of \textsf{contest}, as well as trying to avoid the
unsoundness of the original approach by Mesnard et al.\  
\citeyear{MPV15}. Unfortunately, the ideas in this paper
are preliminary and it does not provide any theoretical
result. Moreover, it only considers pure logic programs, so even if
negative constraints are used during concolic testing, they cannot be
represented in the generated test cases.

Finally, one can also find some similarities
with an approach proposed by Leuschel and De~Schreye
\citeyear{LS98}
in the context of partial deduction~\cite{LS91}.
In particular,
the partial deduction algorithm of Gallagher and
Bruynooghe \citeyear{GB91} introduced the
use of abstract interpretation based on so-called \emph{characteristic
paths} which, roughly speaking, described the deterministic part
of the unfolding of an atom. The authors aimed at preserving
these characteristic paths when computing resultants and
their (most specific) generalisation.
However, as noted by Leuschel and De~Schreye \citeyear{LS98},
this property does not hold, since the generated resultants
are sometimes less deterministic than the original rules.
In order to overcome this problem, Leuschel and De~Schreye
\citeyear{LS98} extended the framework
of Gallagher and
Bruynooghe \citeyear{GB91} to a constraint setting and, moreover,
introduce some \emph{pruning} constraints to avoid matching
more rules than expected.
Although in a different context, this is essentially the
same solution that we have proposed in this
paper in order to overcome the limitations of 
Mesnard et al.\ \citeyear{MPV15}.


\section{Conclusion and Future Work}

In this paper, we have extended concolic testing to CLP. Thanks to the availability of
negative constraints, we have formulated and proved a precise operational soundness
criteria. Moreover, we have proved that for decidable constraint domains,
the selective unification problem is decidable too. Hence, our approach
constitutes an excellent basis for designing  a powerful concolic
testing  tool for CLP programs.

For future work, we consider the definition of a post-processing that takes the
generated test cases, and further restricts them (if needed) in
order to ensure that their execution is always terminating. For
this purpose, we may consider the output of some termination
analysis for CLP programs. Moreover, we plan to deal with a subset of built-ins in order to
cope with practical issues. Finally, we will explore the use of
types  (as defined in Typed Prolog~\cite{SCWD08} or
Mercury~\cite{SHC96}) to further restrict the possible values a variable
can take when generating test cases.

\bibliographystyle{acmtrans}

\begin{thebibliography}{}

\bibitem[\protect\citeauthoryear{Apt}{Apt}{1997}]{Apt97}
{\sc Apt, K.~R.} 1997.
\newblock {\em From Logic Programming to {P}rolog}.
\newblock Prentice Hall.

\bibitem[\protect\citeauthoryear{Comon and Kirchner}{Comon and
  Kirchner}{1999}]{CK99}
{\sc Comon, H.} {\sc and} {\sc Kirchner, C.} 1999.
\newblock Constraint solving on terms.
\newblock In {\em Constraints in Computational Logics: Theory and Applications,
  International Summer School, CCL'99, Revised Lectures}, {H.~Comon},
  {C.~March{\'{e}}}, {and} {R.~Treinen}, Eds. Lecture Notes in Computer
  Science, vol. 2002. Springer, 47--103.

\bibitem[\protect\citeauthoryear{Fortz, Mesnard, Payet, Perrouin, Vanhoof, and
  Vidal}{Fortz et~al\mbox{.}}{2020}]{FMPPVV20}
{\sc Fortz, S.}, {\sc Mesnard, F.}, {\sc Payet, {\'{E}}.}, {\sc Perrouin, G.},
  {\sc Vanhoof, W.}, {\sc and} {\sc Vidal, G.} 2020.
\newblock An {SMT}-based concolic testing tool for logic programs (poster).
\newblock In {\em Proc.\ of the 15th International Symposium on Functional and
  Logic Languages (FLOPS 2020)}, {K.~Nakano} {and} {K.~Sagonas}, Eds. Springer
  LNCS.
\newblock \emph{To appear}. Extended version at
  \url{https://arxiv.org/abs/2002.07115}.

\bibitem[\protect\citeauthoryear{Gallagher and Bruynooghe}{Gallagher and
  Bruynooghe}{1991}]{GB91}
{\sc Gallagher, J.~P.} {\sc and} {\sc Bruynooghe, M.} 1991.
\newblock The derivation of an algorithm for program specialisation.
\newblock {\em New Generation Computing\/}~{\em 9,\/}~3/4, 305--334.

\bibitem[\protect\citeauthoryear{Gange, Navas, Schachte, S{\o}ndergaard, and
  Stuckey}{Gange et~al\mbox{.}}{2015}]{GNSSS15}
{\sc Gange, G.}, {\sc Navas, J.~A.}, {\sc Schachte, P.}, {\sc S{\o}ndergaard,
  H.}, {\sc and} {\sc Stuckey, P.~J.} 2015.
\newblock Horn clauses as an intermediate representation for program analysis
  and transformation.
\newblock {\em Theory and Practice of Logic Programming\/}~{\em 15,\/}~4-5,
  526--542.

\bibitem[\protect\citeauthoryear{Giantsios, Papaspyrou, and Sagonas}{Giantsios
  et~al\mbox{.}}{2015}]{GPS15}
{\sc Giantsios, A.}, {\sc Papaspyrou, N.~S.}, {\sc and} {\sc Sagonas, K.} 2015.
\newblock Concolic testing for functional languages.
\newblock In {\em Proc. of the 17th International Symposium on Principles and
  Practice of Declarative Programming (PPDP 2015)}, {M.~Falaschi} {and}
  {E.~Albert}, Eds. {ACM}, 137--148.

\bibitem[\protect\citeauthoryear{Godefroid, Klarlund, and Sen}{Godefroid
  et~al\mbox{.}}{2005}]{GKS05}
{\sc Godefroid, P.}, {\sc Klarlund, N.}, {\sc and} {\sc Sen, K.} 2005.
\newblock {DART:} directed automated random testing.
\newblock In {\em Proc. of the {ACM} {SIGPLAN} 2005 Conference on Programming
  Language Design and Implementation (PLDI 2005)}, {V.~Sarkar} {and} {M.~W.
  Hall}, Eds. {ACM}, 213--223.

\bibitem[\protect\citeauthoryear{Gurfinkel, Kahsai, Komuravelli, and
  Navas}{Gurfinkel et~al\mbox{.}}{2015}]{GKKN15}
{\sc Gurfinkel, A.}, {\sc Kahsai, T.}, {\sc Komuravelli, A.}, {\sc and} {\sc
  Navas, J.~A.} 2015.
\newblock The {S}ea{H}orn verification framework.
\newblock In {\em Computer Aided Verification - 27th International Conference,
  {CAV} 2015, San Francisco, CA, USA, July 18-24, 2015, Proceedings, Part {I}},
  {D.~Kroening} {and} {C.~S. Pasareanu}, Eds. Lecture Notes in Computer
  Science, vol. 9206. Springer, 343--361.

\bibitem[\protect\citeauthoryear{Jaffar, Maher, Marriott, and Stuckey}{Jaffar
  et~al\mbox{.}}{1998}]{JMMS98}
{\sc Jaffar, J.}, {\sc Maher, M.~J.}, {\sc Marriott, K.}, {\sc and} {\sc
  Stuckey, P.~J.} 1998.
\newblock The semantics of constraint logic programs.
\newblock {\em Journal of {L}ogic {P}rogramming\/}~{\em 37,\/}~1-3, 1--46.

\bibitem[\protect\citeauthoryear{King}{King}{1976}]{Kin76}
{\sc King, J.~C.} 1976.
\newblock Symbolic execution and program testing.
\newblock {\em Commun. {ACM}\/}~{\em 19,\/}~7, 385--394.

\bibitem[\protect\citeauthoryear{Leuschel and Schreye}{Leuschel and
  Schreye}{1998}]{LS98}
{\sc Leuschel, M.} {\sc and} {\sc Schreye, D.~D.} 1998.
\newblock Constrained partial deduction and the preservation of characteristic
  trees.
\newblock {\em New Generation Computing\/}~{\em 16,\/}~3, 283--342.

\bibitem[\protect\citeauthoryear{Lloyd and Shepherdson}{Lloyd and
  Shepherdson}{1991}]{LS91}
{\sc Lloyd, J.~W.} {\sc and} {\sc Shepherdson, J.~C.} 1991.
\newblock Partial evaluation in logic programming.
\newblock {\em J. Log. Program.\/}~{\em 11,\/}~3{\&}4, 217--242.

\bibitem[\protect\citeauthoryear{Mesnard, Payet, and Vidal}{Mesnard
  et~al\mbox{.}}{2015a}]{MPV15}
{\sc Mesnard, F.}, {\sc Payet, {\'{E}}.}, {\sc and} {\sc Vidal, G.} 2015a.
\newblock Concolic testing in logic programming.
\newblock {\em Theory and Practice of Logic Programming\/}~{\em 15,\/}~4-5,
  711--725.

\bibitem[\protect\citeauthoryear{Mesnard, Payet, and Vidal}{Mesnard
  et~al\mbox{.}}{2015b}]{contest}
{\sc Mesnard, F.}, {\sc Payet, {\'{E}}.}, {\sc and} {\sc Vidal, G.} 2015b.
\newblock {\sf Contest} website.
\newblock URL:\\ \url{http://kaz.dsic.upv.es/contest.html}.

\bibitem[\protect\citeauthoryear{Mesnard, Payet, and Vidal}{Mesnard
  et~al\mbox{.}}{2017}]{MPV17}
{\sc Mesnard, F.}, {\sc Payet, {\'{E}}.}, {\sc and} {\sc Vidal, G.} 2017.
\newblock Selective unification in constraint logic programming.
\newblock In {\em Proc. of the 19th International Symposium on Principles and
  Practice of Declarative Programming (PPDP'17)}, {W.~Vanhoof} {and}
  {B.~Pientka}, Eds. {ACM}, 115--126.

\bibitem[\protect\citeauthoryear{Palacios and Vidal}{Palacios and
  Vidal}{2015}]{PV15}
{\sc Palacios, A.} {\sc and} {\sc Vidal, G.} 2015.
\newblock Concolic execution in functional programming by program
  instrumentation.
\newblock In {\em Proc. of the 25th International Symposium on Logic-Based
  Program Synthesis and Transformation ({LOPSTR} 2015)}, {M.~Falaschi}, Ed.
  Lecture Notes in Computer Science, vol. 9527. Springer, 277--292.

\bibitem[\protect\citeauthoryear{Schrijvers, Costa, Wielemaker, and
  Demoen}{Schrijvers et~al\mbox{.}}{2008}]{SCWD08}
{\sc Schrijvers, T.}, {\sc Costa, V.~S.}, {\sc Wielemaker, J.}, {\sc and} {\sc
  Demoen, B.} 2008.
\newblock Towards typed {P}rolog.
\newblock In {\em Proc. of the 24th International Conference on Logic
  Programming (ICLP'08)}, {M.~G. de~la Banda} {and} {E.~Pontelli}, Eds. Lecture
  Notes in Computer Science, vol. 5366. Springer, 693--697.

\bibitem[\protect\citeauthoryear{Sen and Agha}{Sen and Agha}{2006}]{SenA06}
{\sc Sen, K.} {\sc and} {\sc Agha, G.} 2006.
\newblock {CUTE} and jcute: Concolic unit testing and explicit path
  model-checking tools.
\newblock In {\em Proceedings of the 18th International Conference on Computer
  Aided Verification ({CAV} 2006)}, {T.~Ball} {and} {R.~B. Jones}, Eds. Lecture
  Notes in Computer Science, vol. 4144. Springer, 419--423.

\bibitem[\protect\citeauthoryear{Sen, Marinov, and Agha}{Sen
  et~al\mbox{.}}{2005}]{SMA05}
{\sc Sen, K.}, {\sc Marinov, D.}, {\sc and} {\sc Agha, G.} 2005.
\newblock {CUTE:} a concolic unit testing engine for {C}.
\newblock In {\em Proc. of the 10th European Software Engineering Conference
  held jointly with 13th {ACM} {SIGSOFT} International Symposium on Foundations
  of Software Engineering}, {M.~Wermelinger} {and} {H.~C. Gall}, Eds. {ACM},
  263--272.

\bibitem[\protect\citeauthoryear{Somogyi, Henderson, and Conway}{Somogyi
  et~al\mbox{.}}{1996}]{SHC96}
{\sc Somogyi, Z.}, {\sc Henderson, F.}, {\sc and} {\sc Conway, T.~C.} 1996.
\newblock The execution algorithm of {M}ercury, an efficient purely declarative
  logic programming language.
\newblock {\em J. Log. Program.\/}~{\em 29,\/}~1-3, 17--64.

\bibitem[\protect\citeauthoryear{Str{\"{o}}der, Emmes, Schneider{-}Kamp, Giesl,
  and Fuhs}{Str{\"{o}}der et~al\mbox{.}}{2011}]{SESGF11}
{\sc Str{\"{o}}der, T.}, {\sc Emmes, F.}, {\sc Schneider{-}Kamp, P.}, {\sc
  Giesl, J.}, {\sc and} {\sc Fuhs, C.} 2011.
\newblock A linear operational semantics for termination and complexity
  analysis of {ISO} {P}rolog.
\newblock In {\em Proc. of the 21st International Symposium on Logic-Based
  Program Synthesis and Transformation (LOPSTR'11)}, {G.~Vidal}, Ed. Lecture
  Notes in Computer Science, vol. 7225. Springer, 237--252.

\bibitem[\protect\citeauthoryear{Stulova, Morales, and Hermenegildo}{Stulova
  et~al\mbox{.}}{2014}]{SMH14}
{\sc Stulova, N.}, {\sc Morales, J.~F.}, {\sc and} {\sc Hermenegildo, M.~V.}
  2014.
\newblock Assertion-based debugging of higher-order {(C)LP} programs.
\newblock In {\em Proc. of the 16th International Symposium on Principles and
  Practice of Declarative Programming (PPDP 2014)}, {O.~Chitil}, {A.~King},
  {and} {O.~Danvy}, Eds. {ACM}, 225--235.

\bibitem[\protect\citeauthoryear{Vidal}{Vidal}{2014}]{Vid14}
{\sc Vidal, G.} 2014.
\newblock Concolic execution and test case generation in {P}rolog.
\newblock In {\em Proc. of the 24th International Symposium on Logic-Based
  Program Synthesis and Transformation ({LOPSTR} 2014)}, {M.~Proietti} {and}
  {H.~Seki}, Eds. Lecture Notes in Computer Science, vol. 8981. Springer,
  167--181.

\end{thebibliography}


\newpage
\appendix

\section{Proofs for Section~\ref{sec:concolic-exec}}
\label{appendix:concolic-exec}

In this section, we provide proofs for the technical results of
Section~\ref{sec:concolic-exec} as well some additional properties.
Let us first consider the following lemma:
\begin{lemma}\label{lemma:unify-set}
  Let $C$ and $C'$ be some variable disjoint constraint atoms.
  Then, $C$ and $C'$ unify if and only if
  $\set(C)\cap\set(C')\neq\emptyset$.
\end{lemma}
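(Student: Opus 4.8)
The plan is to prove the biconditional by unfolding the definitions of $\approx$ and $\set(\cdot)$ and relating the satisfiability of a conjunction to the existence of a common valuation. First I would dispatch the trivial case where $C$ and $C'$ have different predicate symbols: then $C \approx C'$ is $\false$, which is unsatisfiable, and also $\set(C)$ and $\set(C')$ consist of atoms with distinct predicate symbols, so their intersection is empty; hence both sides of the biconditional are false and the equivalence holds vacuously.

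For the main case, suppose $C = \mystate{c}{p(\ol{s})}$ and $C' = \mystate{d}{p(\ol{t})}$ with $\ol{s}$ and $\ol{t}$ of the same length $n$ (they must have equal arity since the predicate symbol is the same). Then $C \approx C'$ is the formula $\ol{s} = \ol{t} \land c \land d$, and $C$ and $C'$ unify iff $\cD \models \exists(\ol{s}=\ol{t}\land c\land d)$, i.e.\ iff there is a valuation $v$ with $\cD\models_v \ol{s}=\ol{t}$, $\cD\models_v c$, and $\cD\models_v d$. The first conjunct means $[\ol{s}]_v = [\ol{t}]_v$ componentwise, hence $[p(\ol{s})]_v = p([\ol{s}]_v) = p([\ol{t}]_v) = [p(\ol{t})]_v$. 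So from such a $v$ the atom $A := [p(\ol{s})]_v = [p(\ol{t})]_v$ witnesses $A \in \set(C)\cap\set(C')$, using $\cD\models_v c$ and $\cD\models_v d$ respectively. This gives the forward direction.

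For the converse, suppose $A \in \set(C)\cap\set(C')$. Then there exist valuations $v_1, v_2$ with $A = [p(\ol{s})]_{v_1}$, $\cD\models_{v_1} c$, and $A = [p(\ol{t})]_{v_2}$, $\cD\models_{v_2} d$. Here I would use the crucial hypothesis that $C$ and $C'$ are variable disjoint: define a single valuation $v$ that agrees with $v_1$ on $\var(C)$ and with $v_2$ on $\var(C')$ (and is arbitrary elsewhere); this is well defined precisely because $\var(C)\cap\var(C')=\emptyset$. Since $c$ and $\ol{s}$ only mention variables of $C$, we get $\cD\models_v c$ and $[\ol{s}]_v = [\ol{s}]_{v_1}$; similarly $\cD\models_v d$ and $[\ol{t}]_v = [\ol{t}]_{v_2}$. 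From $[p(\ol{s})]_{v_1} = A = [p(\ol{t})]_{v_2}$ we conclude $[\ol{s}]_v = [\ol{t}]_v$, so $\cD\models_v \ol{s}=\ol{t}\land c\land d$, i.e.\ $C\approx C'$ is satisfiable and $C$, $C'$ unify.

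The only subtle point — and the step I'd be most careful about — is the construction of the merged valuation $v$ in the converse direction: it genuinely relies on variable disjointness (the lemma is false without it, e.g.\ $C = \mystate{X=a}{p(X)}$ and $C' = \mystate{X=b}{p(X)}$ share atoms of the form $p(\cdot)$ at the structural level but never the same ground atom, yet with the \emph{same} variable $X$ one cannot satisfy $X=a\land X=b$), together with the standard fact that the truth value of a formula and the value of a term under a valuation depend only on the valuation's restriction to the (free) variables occurring in them. Everything else is routine unfolding of the definitions of $\set$, $[\cdot]_v$, and satisfiability given in the Preliminaries.
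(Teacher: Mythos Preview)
Your proof is correct and follows essentially the same approach as the paper's: both directions unfold the definitions of $\approx$ and $\set(\cdot)$, and the converse uses variable disjointness to merge the two valuations $v_1$ and $v_2$ into a single $v$. Your treatment is in fact slightly more thorough, since you explicitly dispatch the different-predicate-symbol case and comment on why the disjointness hypothesis is essential.
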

\begin{proof}
  Let $C$ and $C'$ be two constraint atoms.
  \begin{itemize}
    \item Suppose that $C$ and $C'$ unify. Then, they have
    the same predicate symbol \ie $C$ has the form $\mystate{c}{p(\ol{s})}$
    and $C'$ has the form $\mystate{d}{p(\ol{t})}$.
    Moreover, $C \approx C'$ is satisfiable \ie there exists a
    valuation $v$ such that $\cD\models_v(\ol{s}=\ol{t} \land c \land d)$.
    Note that $p([\ol{s}]_v)\in\set(C)$, $p([\ol{t}]_v)\in\set(C')$ and
    $[\ol{s}]_v=[\ol{t}]_v$. So, $\set(C) \cap \set(C') \neq \emptyset$.

    \item Suppose that $\set(C)\cap\set(C')\neq\emptyset$.
    Then necessarily $C$ and $C'$ have the same predicate symbol \ie $C$
    has the form $\mystate{c}{p(\ol{s})}$ and $C'$ has the form
    $\mystate{d}{p(\ol{t})}$.
    Let $p(\ol{a})\in \set(C)\cap\set(C')$. Then there exists a valuation
    $v_1$ such that $\cD\models_{v_1} c$ and $\ol{a}=[\ol{s}]_{v_1}$ and a
    valuation $v_2$ such that $\cD\models_{v_2} d$ and
    $\ol{a}=[\ol{t}]_{v_2}$. Hence, as $C$ and $C'$ are variable disjoint,
    there exists a valuation $v$ such that $v(V)=v_1(V)$ for all variable $V$
    occurring in $C$ and $v(V)=v_2(V)$ for all variable $V$ occurring in
    $C'$. Then, we have
    $[\ol{s}]_v=[\ol{s}]_{v_1}=\ol{a}$,
    $[\ol{t}]_v=[\ol{t}]_{v_2}=\ol{a}$,
    $[c]_{v}=[c]_{v_1}=1$ and $[d]_{v}=[d]_{v_2}=1$. Consequently,
    we have $\cD\models_v(\ol{s}=\ol{t} \land c \land d)$ \ie
    $C\approx C'$ is satisfiable. So we have proved that
    $C$ and $C'$ unify.
  \end{itemize}
\end{proof}
Now, we prove the following proposition, which states an essential
property of 
$\negcon$:

\begin{proposition}
  \label{prop:gamma-do-not-unify}
  Let $C$ be a constraint atom and $\cH$ be a finite
  set of constraint atoms that have the same predicate symbol as $C$
  and are variable disjoint with $C$.
  Then, $C\land \negcon(C,\cH)$ does not unify with any constraint atom
  in $\cH$.
\end{proposition}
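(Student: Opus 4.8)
The plan is to reduce the statement to a one-line semantic fact about the universal quantifier used in the definition of $\negcon$. Write $C = \mystate{c}{p(\ol{s})}$ and let $H = \mystate{d}{p(\ol{t})}$ be an arbitrary element of $\cH$; by hypothesis $H$ has the same predicate symbol as $C$ and is variable disjoint with $C$, so $\negcon(C,\cH)$ is well defined and has $\negcon(C,H) = \forall V\,(\ol{s}\neq\ol{t}\vee\neg d)$ as one of its conjuncts, where $V=\var(H)$. Since $C\land\negcon(C,\cH) = \mystate{c\land\negcon(C,\cH)}{p(\ol{s})}$, proving that $C\land\negcon(C,\cH)$ does not unify with $H$ amounts to proving that the formula
\[
  (C\land\negcon(C,\cH))\approx H \;=\; \ol{s}=\ol{t}\land c\land\negcon(C,\cH)\land d
\]
is unsatisfiable in $\cD$; since $H$ is arbitrary, this gives the proposition.

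First I would argue by contradiction: suppose some valuation $v$ satisfies $\ol{s}=\ol{t}\land c\land\negcon(C,\cH)\land d$. Then in particular $\cD\models_v\negcon(C,H)$, i.e.\ $\cD\models_v\forall V\,(\ol{s}\neq\ol{t}\vee\neg d)$. The key step is to instantiate this quantified formula back at $v$ itself: taking the witness valuation to be $v$ (which trivially differs from $v$ only on the variables in $V$) yields $\cD\models_v\ol{s}\neq\ol{t}\vee\neg d$, that is, $[\ol{s}]_v\neq[\ol{t}]_v$ or $\cD\models_v\neg d$. Either disjunct contradicts $\cD\models_v\ol{s}=\ol{t}\land d$, so no such $v$ exists. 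The degenerate case $\cH=\emptyset$ needs no argument, since then $\negcon(C,\cH)=\true$ and there is no constraint atom in $\cH$ to consider.

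The hard part will not be the logic but the variable bookkeeping in the instantiation step: in $(C\land\negcon(C,\cH))\approx H$ the quantifier $\forall V$ inside $\negcon(C,H)$ binds exactly the variables of $H$, while the separate conjunct $d$ has free occurrences of those same variables, so one must either adopt the usual convention of renaming the bound $V$ apart before forming the conjunction, or note that, because $C$ and $H$ are variable disjoint, $\forall V$ captures no variable of $\ol{s}$, which makes ``instantiate at $v$'' legitimate. An equivalent, perhaps cleaner, route is to phrase everything through the sets described by states: assuming w.l.o.g.\ that the bound variables are fresh, one shows $\set(C\land\negcon(C,\cH))\cap\set(H)=\emptyset$ by the same ``instantiate at $v$'' trick and then invokes Lemma~\ref{lemma:unify-set} to conclude non-unification. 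I expect the write-up to be short either way.
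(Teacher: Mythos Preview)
Your proposal is correct and follows essentially the same route as the paper: both arguments observe that $(C\land\negcon(C,\cH))\approx H$ simultaneously contains the conjunct $\ol{s}=\ol{t}\land d$ and the conjunct $\forall V\,\lnot(\ol{s}=\ol{t}\land d)$, which is immediately unsatisfiable. Your write-up is a bit more explicit about the valuation-level instantiation and the bound/free variable bookkeeping than the paper's (which simply states the contradiction without unpacking it), but the underlying argument is the same.
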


\begin{proof} 
  If $\cH$ is empty, then the result holds vacuously.
  Now suppose that $\cH$ is not empty and let $H=\mystate{d}{p(\ol{t})}$
  be a constraint atom in $\cH$.
  Suppose that $C=\mystate{c}{p(\ol{s})}$.
  For the sake of readability, let $\gamma=\negcon(C,\cH)$.
  Then, $(C\land\gamma)\approx H$ is the formula
  $\left(\ol{s}=\ol{t}\land c \land \gamma \land d\right)$.
  Therefore, $(C\land\gamma)\approx H$ contains the conjunct
  $\ol{s}=\ol{t} \land d$ together with the conjunct
  $\negcon(C,H) = \forall V (\ol{s}\neq\ol{t} \lor \lnot d) =
  \forall V \lnot(\ol{s}=\ol{t} \land d)$, where $V$
  denotes the set of variables occurring in $H$.
  So $(C\land\gamma)\approx H$ is not satisfiable
  \ie $C\land\gamma$ does not unify with $H$.
\end{proof}
The next proposition states that $\negcon(C,\cH)$ is maximal, in the
sense that it captures all the constraints that make $C$ non-unifiable
with the elements of $\cH$.
\begin{proposition}[Maximality of $\negcon$]
  \label{prop:gamma-maximality}
  Let $C$ be a constraint atom, $d$ be a constraint and $\cH$ be a finite
  set of constraint atoms that have the same predicate symbol as $C$ and
  are variable disjoint with $C\land d$.
  If $C\land d$ does not unify with any constraint atom in $\cH$
  then we have $\left(C\land d\right) \leq \left(C\land \negcon(C,\cH)\right)$.
\end{proposition}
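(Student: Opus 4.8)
The plan is to unfold the definition of $\leq$ and reduce the claim to a set-inclusion statement. Write $C=\mystate{c}{p(\ol{s})}$; then $C\land d=\mystate{c\land d}{p(\ol{s})}$ and $C\land\negcon(C,\cH)=\mystate{c\land\negcon(C,\cH)}{p(\ol{s})}$, so both states carry the very same atom $p(\ol{s})$. In particular their atom sequences are (trivially) variants, and the only thing left to prove is $\set(C\land d)\subseteq\set(C\land\negcon(C,\cH))$. The degenerate case $\cH=\emptyset$ is immediate, since then $\negcon(C,\cH)=\true$ and $\set(C\land d)\subseteq\set(C)=\set(C\land\true)$; so from now on I assume $\cH=\{H_1,\dots,H_k\}$ with $k\ge 1$, say $H_i=\mystate{d_i}{p(\ol{t_i})}$, and I let $V_i=\var(H_i)$.

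First I would pick an arbitrary $p(\ol{a})\in\set(C\land d)$, witnessed by a valuation $v$ with $\cD\models_v c\land d$ and $[\ol{s}]_v=\ol{a}$. To obtain $p(\ol{a})\in\set(C\land\negcon(C,\cH))$ it suffices to show $\cD\models_v\negcon(C,\cH)$, i.e.\ $\cD\models_v\negcon(C,H_i)$ for every $i$, since $\negcon(C,\cH)$ is the conjunction of the $\negcon(C,H_i)$ and $\cD\models_v c$ already holds.

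The core step is to prove each $\cD\models_v\negcon(C,H_i)$ by contradiction. Assume $\cD\not\models_v\forall V_i\,(\ol{s}\neq\ol{t_i}\lor\neg d_i)$; then there is a valuation $w$ agreeing with $v$ off $V_i$ with $[\ol{s}=\ol{t_i}]_w=1$ and $[d_i]_w=1$. Since $\cH$ is variable disjoint with $C\land d$ we have $\var(\ol{s})\cap V_i=\emptyset$, hence $[\ol{s}]_w=[\ol{s}]_v=\ol{a}$, so $[\ol{t_i}]_w=\ol{a}$ as well; together with $[d_i]_w=1$ this yields $p(\ol{a})\in\set(H_i)$. Thus $p(\ol{a})\in\set(C\land d)\cap\set(H_i)$, and by Lemma~\ref{lemma:unify-set}, applied to the variable disjoint constraint atoms $C\land d$ and $H_i$, the atoms $C\land d$ and $H_i$ unify --- contradicting the hypothesis that $C\land d$ does not unify with any constraint atom in $\cH$. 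Hence $\cD\models_v\negcon(C,\cH)$, so $p(\ol{a})\in\set(C\land\negcon(C,\cH))$; this gives $\set(C\land d)\subseteq\set(C\land\negcon(C,\cH))$, which together with the variants observation is exactly $(C\land d)\leq(C\land\negcon(C,\cH))$.

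I expect the only delicate point to be the valuation bookkeeping in the core step: tracking which variables each subformula depends on, and invoking the variable-disjointness hypothesis at the right moments so that $w$ (which may differ from $v$ on $V_i$) still evaluates $\ol{s}$ to $\ol{a}$ and the common value $p(\ol{a})$ witnesses membership in both $\set(C\land d)$ and $\set(H_i)$. Everything else is unfolding definitions and one direct appeal to Lemma~\ref{lemma:unify-set}.
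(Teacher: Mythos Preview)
Your proof is correct and follows essentially the same approach as the paper's: pick an arbitrary $p(\ol{a})\in\set(C\land d)$ via a valuation $v$, use Lemma~\ref{lemma:unify-set} together with the non-unification hypothesis to conclude $\cD\models_v\negcon(C,H_i)$ for each $i$, and hence $p(\ol{a})\in\set(C\land\negcon(C,\cH))$. The only cosmetic differences are that the paper argues directly (non-unification $\Rightarrow$ empty intersection $\Rightarrow$ $p(\ol{a})\notin\set(H_i)$ $\Rightarrow$ $\cD\models_v\negcon(C,H_i)$) where you spell out the contrapositive as a contradiction, and the paper treats the degenerate case $\set(C\land d)=\emptyset$ rather than $\cH=\emptyset$; neither difference is substantive.
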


\begin{proof} 
  For the sake of readability, we let $\gamma=\negcon(C,\cH)$.
  Suppose that $C=\mystate{c}{p(\ol{s})}$, $\cH=\{\ol{H_k}\}$
  and $C\land d$ does not unify with any constraint atom in $\cH$.
  \begin{itemize}
    \item Suppose that $\set(C\land d)$ is empty. Then, the result
    trivially holds.
    \item Suppose that $\set(C\land d)$ is not empty.
    Let $p(\ol{a})\in\set(C\land d)$.
    Then, there exists a valuation $v$ such that $\cD\models_v(c\land d)$
    and $\ol{a}=[\ol{s}]_v$.

    For any $H_i=\mystate{e}{p(\ol{t})}\in\cH$, as $C\land d$ does not unify
    with $H_i$, we have $\set(C\land d)\cap\set(H_i)=\emptyset$ by
    Lemma~\ref{lemma:unify-set}. So, $p(\ol{a})\not\in\set(H_i)$ \ie
    $\cD\models_v \forall V \lnot(\ol{s}=\ol{t} \land e)$
    where $V$ denotes the set of variables occurring in $H_i$.
    Hence, $\cD\models_v \negcon(C,H_i)$.

    Consequently, we have
    $\cD\models_v \negcon(C,H_1)\land \ldots\land\negcon(C,H_k)$
    \ie $\cD\models_v \gamma$. Moreover, as
    $\cD\models_v(c\land d)$ then in particular $\cD\models_v c$.
    Therefore, we have $\cD\models_v c \land \gamma$.
    So, $[\ol{s}]_v=\ol{a} \in \set(C\land\gamma)$.

    We have then proved that $\set(C\land d) \subseteq
    \set(C\land \gamma)$. Hence the result.
  \end{itemize}
\end{proof}
The next lemma states a basic property of states:

\begin{lemma} \label{lemma:one-step-neg1} Let $Q,S$ be states with
  $Q\leq S$. Then, $\clauses(Q,P) \subseteq \clauses(S,P)$.
\end{lemma}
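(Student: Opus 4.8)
The plan is to reduce the claim to Lemma~\ref{lemma:unify-set} by projecting the ``set described by'' operator onto the leftmost atom. The point is that membership of a rule $r$ in $\clauses(Q,P)$ is, by definition, unification of $\catom(Q)$ with $\catom(r')$ for a fresh copy $r'$ of $r$, and unification is characterised semantically by non-empty intersection of the described sets; so the set inclusion $\set(Q)\subseteq\set(S)$ coming from $Q\leq S$ should transfer to the leftmost atoms and hence to unifiability.

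\textbf{Steps.} First I would dispose of the degenerate case: if $Q$ is the empty (successful) state, then so is $S$ (since $Q\leq S$ forces the two atom sequences to be variants), and $\clauses(Q,P)=\emptyset$, so the inclusion holds trivially. Hence assume $Q=\mystate{c}{A_1,\ldots,A_n}$ and $S=\mystate{e}{B_1,\ldots,B_n}$ with $n>0$, so that $\catom(Q)=\mystate{c}{A_1}$ and $\catom(S)=\mystate{e}{B_1}$ are well-defined constraint atoms. Second, I would establish the key reduction $\set(\catom(Q))\subseteq\set(\catom(S))$: if $p(\ol a)\in\set(\catom(Q))$, pick a valuation $v$ with $\cD\models_v c$ and $[A_1]_v=p(\ol a)$; then $[A_1]_v,\ldots,[A_n]_v\in\set(Q)\subseteq\set(S)$, so there is $w$ with $\cD\models_w e$ and $[B_i]_w=[A_i]_v$ for all $i$, in particular $p(\ol a)=[B_1]_w\in\set(\catom(S))$. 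Third, take $r\in\clauses(Q,P)$, witnessed by a fresh copy $r'$ of $r$ with $\mathit{solv}(\catom(Q)\approx\catom(r'))=\mathtt{true}$; choosing $r'$ variable disjoint from both $Q$ and $S$ (which is harmless, since a rule and any of its fresh copies describe the same set via $\set$, hence the membership $r\in\clauses(Q,P)$ does not depend on the choice of fresh copy), Lemma~\ref{lemma:unify-set} gives $\set(\catom(Q))\cap\set(\catom(r'))\neq\emptyset$, whence $\set(\catom(S))\cap\set(\catom(r'))\neq\emptyset$ by the second step, and applying Lemma~\ref{lemma:unify-set} in the other direction yields that $\catom(S)$ unifies with $\catom(r')$, i.e.\ $r\in\clauses(S,P)$.

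\textbf{Main obstacle.} There is no deep difficulty here; the only points that need care are the bookkeeping of fresh copies and variable disjointness, so that Lemma~\ref{lemma:unify-set} can be invoked in both directions, together with the (routine) observation that $\clauses$ is insensitive to the particular fresh copy chosen, which follows from the fact that variants describe the same set under $\set$.
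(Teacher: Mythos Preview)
Your proposal is correct and follows essentially the same route as the paper: both arguments reduce membership in $\clauses$ to unification, invoke Lemma~\ref{lemma:unify-set} to turn unification into non-empty intersection of described sets, use $Q\leq S$ to transfer the intersection, and apply Lemma~\ref{lemma:unify-set} again. Your version is simply more explicit than the paper's on two points the paper leaves implicit --- the projection step $\set(\catom(Q))\subseteq\set(\catom(S))$ (the paper just writes ``as $Q\leq S$'') and the insensitivity of $\clauses$ to the choice of fresh copy --- but there is no substantive difference in strategy.
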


\begin{proof} 
  If $\clauses(Q,P)=\emptyset$ then the result trivially holds.
  Now suppose that $\clauses(Q,P)\neq\emptyset$ and let
  $r\in\clauses(Q,P)$. Then, for some fresh copy $r'$ of $r$, we have
  $\mathit{solv}(C \approx R) = \mathtt{true}$ where $C=\catom(Q)$ and
  $R=\catom(r')$.
  So, by Lemma~\ref{lemma:unify-set},
  we have $\set(C)\cap\set(R)\neq\emptyset$.
  Therefore, as $Q\leq S$, we have $\set(C')\cap\set(R)\neq\emptyset$
  where $C'=\catom(S)$. Hence, again by Lemma~\ref{lemma:unify-set},
  we have $\mathit{solv}(C' \approx R) = \mathtt{true}$ \ie
  $r\in\clauses(S,P)$. Here, we assumed the same variant $r'$ of $r$
  for simplicity.
\end{proof}
The next results show that our notion of concolic execution is well
defined:

\begin{lemma} \label{lemma:one-step-neg2} Let $\tuplec{Q\sep S}$ be a
  concolic state with $\clauses(Q,P) = R_Q$ and
  $\clauses(S,P)=R_S$. Then, $\clauses(S\wedge\gamma)=R_Q$, where
  $\gamma= \negcon(\catom(S), \catom(R_S \setminus R_Q))$.
\end{lemma}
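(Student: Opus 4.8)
The plan is to prove the set equality by two inclusions, after first passing to the level of constraint atoms. Since $\tuplec{Q\sep S}$ is a concolic state we have $Q\leq S$, so Lemma~\ref{lemma:one-step-neg1} gives $R_Q\subseteq R_S$; hence $R_S\setminus R_Q$, and thus $\gamma=\negcon(\catom(S),\catom(R_S\setminus R_Q))$, are well defined (the arguments of $\negcon$ being a constraint atom together with a finite set of constraint atoms that have the same predicate symbol and are taken variable disjoint from $\catom(S)$ via fresh copies). Writing $S=\mystate{d}{\ol B}$, note $\catom(S\wedge\gamma)=\catom(S)\wedge\gamma$; and since $\clauses(\cdot,P)$ inspects only the leftmost atom and the accumulated constraint, $\clauses(S\wedge\gamma,P)$ is exactly the set of rules $r$ whose fresh copy $r'$ satisfies that $\catom(S)\wedge\gamma$ unifies with $\catom(r')$. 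So it suffices to show this set is $R_Q$.

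For $\clauses(S\wedge\gamma,P)\subseteq R_Q$: the states $S\wedge\gamma$ and $S$ have identical atoms and $\set(S\wedge\gamma)\subseteq\set(S)$, so $S\wedge\gamma\leq S$ and Lemma~\ref{lemma:one-step-neg1} yields $\clauses(S\wedge\gamma,P)\subseteq R_S$. It remains to exclude the rules of $R_S\setminus R_Q$: for each such $r$, Proposition~\ref{prop:gamma-do-not-unify} applied with $C=\catom(S)$ and $\cH=\catom(R_S\setminus R_Q)$ shows that $\catom(S)\wedge\gamma$ does not unify with $\catom(r')$, so $r\notin\clauses(S\wedge\gamma,P)$. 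Therefore $\clauses(S\wedge\gamma,P)\subseteq R_S\setminus(R_S\setminus R_Q)=R_Q$.

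For the reverse inclusion $R_Q\subseteq\clauses(S\wedge\gamma,P)$, the key step is to prove $\catom(Q)\leq\catom(S)\wedge\gamma$; then, since $\clauses(Q,P)=\clauses(\catom(Q),P)$ and $\clauses(S\wedge\gamma,P)=\clauses(\catom(S)\wedge\gamma,P)$, Lemma~\ref{lemma:one-step-neg1} applied to these constraint atoms gives $R_Q\subseteq\clauses(S\wedge\gamma,P)$. To establish $\catom(Q)\leq\catom(S)\wedge\gamma$: from $Q\leq S$ we get $\catom(Q)\leq\catom(S)$ (the leftmost atoms of variant sequences are variants, and $\set(\catom(\cdot))$ is the projection of $\set(\cdot)$ onto the first atom); and every rule $r''\notin R_Q=\clauses(Q,P)$ fails to match $Q$, so $\catom(Q)$ does not unify with any element of $\cH=\catom(R_S\setminus R_Q)$. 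Expressing $\catom(Q)$, up to $\equiv$, in the form $\catom(S)\wedge d'$ and invoking Proposition~\ref{prop:gamma-maximality} (maximality of $\negcon$) then gives $\catom(Q)\leq\catom(S)\wedge\negcon(\catom(S),\cH)=\catom(S)\wedge\gamma$. Alternatively this can be done directly: given $q(\ol a)\in\set(\catom(Q))$, take a valuation $v$ witnessing $q(\ol a)\in\set(\catom(S))$; since $\gamma$ only constrains the variables of the leftmost atom of $S$, the truth of $\gamma$ under $v$ reduces, by the shape of $\negcon$, to the statement that $q(\ol a)\notin\set(H)$ for each $H\in\cH$, which holds because otherwise $\set(\catom(Q))\cap\set(H)\neq\emptyset$ and Lemma~\ref{lemma:unify-set} would put the corresponding rule of $R_S\setminus R_Q$ into $\clauses(Q,P)$, a contradiction.

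The main obstacle is the bookkeeping in this last step. Proposition~\ref{prop:gamma-maximality} is phrased for a constraint atom of the literal form $C\wedge d$, whereas $\catom(Q)$ and $\catom(S)$ only share \emph{variant} (not identical) leftmost atoms; one must therefore rewrite $\catom(Q)$ as $\catom(S)\wedge d'$ modulo $\equiv$, which relies on the variable-disjointness of $Q$ from $S$ and on the fresh copies of rules being standardised apart, and on checking that ``$r\notin\clauses(Q,P)$'' (which quantifies over one fresh copy) still entails non-unification with the particular fresh copy used inside $\gamma$ (true by renaming-invariance of unification). Each of these facts is routine, but keeping the renamings coherent is the only delicate part of the argument.
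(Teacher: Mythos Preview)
Your proof is correct and follows the same route as the paper: Proposition~\ref{prop:gamma-do-not-unify} for the inclusion $\clauses(S\wedge\gamma,P)\subseteq R_Q$, and Proposition~\ref{prop:gamma-maximality} (via $\catom(Q)\leq\catom(S)\wedge\gamma$, writing $\catom(Q)$ as $\catom(S)\wedge d$) together with Lemma~\ref{lemma:unify-set} for the reverse inclusion. You are more explicit than the paper about the intermediate step $\clauses(S\wedge\gamma,P)\subseteq R_S$ and about the variant/renaming bookkeeping needed to legitimately invoke Proposition~\ref{prop:gamma-maximality}, both of which the paper glosses over (it simply asserts ``$C=C'\land d$ for some constraint $d$'' and ``we assumed the same variant $r'$ of $r$ for simplicity'').
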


\begin{proof} 
  Since $\tuplec{Q\sep S}$ is a concolic state, we have $Q\leq S$. By
  Lemma~\ref{lemma:one-step-neg1}, we have
  $\rules(Q,P)\subseteq\rules(S,P)$.
  By Proposition~\ref{prop:gamma-do-not-unify}, we have that
  $\rules(S\land\gamma,P)\subseteq\rules(Q,P)$.
  Now, we only need to prove that
  $\rules(Q,P)\subseteq \rules(S\land\gamma,P)$.
  If $\clauses(Q,P)=\emptyset$ then the result trivially holds.  Now
  suppose that $\clauses(Q,P)\neq\emptyset$ and let
  $r\in\clauses(Q,P)$. Then, for some fresh copy $r'$ of $r$, we have
  $\mathit{solv}(C \approx R) = \mathtt{true}$ where $C=\catom(Q)$ and
  $R=\catom(r')$.
  Let $C' = \catom(S)$. Since $Q\leq S$, we have that $C = C'\land d$
  for some constraint $d$. By Proposition~\ref{prop:gamma-maximality},
  we have $C= \left(C'\land d\right) \leq \left(C'\land \gamma\right)$.
  By Lemma~\ref{lemma:unify-set}, we have
  $\set(C)\cap\set(R)\neq\emptyset$.  Therefore, as
  $C \leq \left(C'\land \gamma\right)$, we have
  $\set(C'\land\gamma)\cap\set(R)\neq\emptyset$. Hence, again by
  Lemma~\ref{lemma:unify-set}, we have
  $\mathit{solv}(C'\land\gamma \approx R) = \mathtt{true}$ \ie
  $r\in\clauses(S\land\gamma,P)$. Here, we assumed the same variant
  $r'$ of $r$ for simplicity.
\end{proof}

\begin{lemma} \label{lemma:one-step}
  Let $\tuplec{Q\sep S}$ be a concolic state with
  $\tuplec{Q\sep S} \lracn_{\pi,R_Q,R_S} \tuplec{Q'\sep S'}$.
  Then, $\tuplec{Q'\sep S'}$ is also a concolic state.
\end{lemma}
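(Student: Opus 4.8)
The plan is to show that the three defining conditions of a concolic state---namely that $Q'$ and $S'$ are states, that $Q'\leq S'$, and that $S'$ carries a trace---are all preserved by a concolic execution step. The first and third are immediate from Definition~\ref{def:concolicexec}: the step produces $Q'$ via $Q\lra_r Q'$ and $S'$ via $S\wedge\gamma\lra_r S'$, both of which are ordinary derivation steps in the standard operational semantics, hence both $Q'$ and $S'$ are states; and $S'$ is labelled with $\pi.\ell(r)$ by construction. So the real content is establishing $Q'\leq S'$.

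To prove $Q'\leq S'$, recall that $Q\leq S$ (since $\tuplec{Q\sep S}$ is a concolic state). First I would unfold what the two derivation steps produce. Writing $Q=\mystate{d}{p(\ol u),\ol B}$ and $S=\mystate{e}{p(\ol w),\ol B'}$ where $\ol B$ and $\ol B'$ are variants and $\set(Q)\subseteq\set(S)$, and taking the (fresh, common) input rule $r$ to be $p(\ol s)\leftarrow c\land\ol C$, the step gives $Q'=\mystate{\ol s=\ol u\land c\land d}{\ol C,\ol B}$ and, from $S\wedge\gamma = \mystate{e\land\gamma}{p(\ol w),\ol B'}$, we get $S'=\mystate{\ol s=\ol w\land c\land e\land\gamma}{\ol C,\ol B'}$. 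The atom sequences $\ol C,\ol B$ and $\ol C,\ol B'$ are variants, so what remains is the inclusion $\set(Q')\subseteq\set(S')$. Here I would use the fact that, since $Q\leq S$, the constraint atom $\catom(Q)$ equals $\catom(S)\land d'$ for some constraint $d'$ (so the ``extra'' restriction of $Q$ relative to $S$ lives entirely in the constraint part), together with Lemma~\ref{lemma:one-step-neg2}, which tells us $\clauses(S\wedge\gamma)=R_Q=\clauses(Q,P)$, i.e.\ the symbolic state augmented with $\gamma$ matches exactly the rules that the concrete state matches---in particular $r\in\clauses(S\wedge\gamma)$, so the step on the symbolic side is well defined. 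The inclusion itself is then a straightforward valuation-chasing argument: given $v$ with $\cD\models_v(\ol s=\ol u\land c\land d)$, one shows (using $\set(Q)\subseteq\set(S)$ applied to the leftmost atom, plus $\cD\models_v\gamma$, which follows because $C\land d\leq C\land\gamma$ by Proposition~\ref{prop:gamma-maximality} since the concrete state does not unify with the rules in $R_S\setminus R_Q$) that a suitable valuation witnesses membership of the corresponding instance in $\set(S')$; the variant relationship between $\ol B$ and $\ol B'$ lets one transport the instantiation of the remaining atoms.

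The main obstacle I anticipate is the bookkeeping around variables and freshness: the input rule $r$ is standardized apart from both $Q$ and $S$, and $Q$, $S$ themselves are variable disjoint in an initial concolic state but need not remain syntactically variable disjoint after several steps, so one has to be careful about exactly which valuation is used and over which variables it agrees with $v$. The cleanest route is to reason entirely at the level of $\set(\cdot)$ and the relation $\leq$, invoking Lemma~\ref{lemma:unify-set}, Proposition~\ref{prop:gamma-maximality} and Lemma~\ref{lemma:one-step-neg2} as black boxes, rather than manipulating substitutions directly; this keeps the freshness issues localized to the observation that $\set$ is insensitive to renaming of the rule's fresh variables. I would also remark that, because $\gamma$ only ever constrains variables of $S$ (the universally quantified $V$ in $\negcon$ ranges over rule-head variables, which are bound), adding $\gamma$ to $S$ cannot accidentally couple $S$ with $Q$, so the concrete side of the step is genuinely unaffected.
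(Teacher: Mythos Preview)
Your proposal is correct and follows essentially the same route as the paper: unfold the two derivation steps, use Lemma~\ref{lemma:one-step-neg2} to ensure the symbolic step is well defined, invoke Proposition~\ref{prop:gamma-maximality} to obtain $\catom(Q)\leq\catom(S)\land\gamma$ from $Q\leq S$, and conclude $Q'\leq S'$ because the same fresh rule body is prepended on both sides. The paper's proof is terser (it ends with ``the claim follows from $\mystate{c}{p(\ol u)}\leq\mystate{c'\land\gamma}{p(\ol{u'})}$'' and leaves the valuation-chasing implicit), whereas you spell out the $\set$-level argument and the freshness bookkeeping more carefully, but the structure and the key lemmas used are identical.
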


\begin{proof} 
  Let $Q=\mystate{c}{p(\ol{u}),\ol{A}}$ and
  $S=\mystate{c'}{p(\ol{u'}),\ol{A'}}$.  Since $\tuplec{Q\sep S}$ is a
  concolic state, we have $Q\leq S$ and, moreover, $p(\ol{u}),\ol{A}$
  and $p(\ol{u'}),\ol{A'}$ are variants.
  By Lemma~\ref{lemma:one-step-neg2}, we have
  $\rules(Q,P)=\rules(S\land\gamma,P)$.  Let $r
  \in\clauses(Q,P)$. Then, for some fresh copy
  $r' = (p(\ol{s}) \leftarrow d\land\ol{B})$ of $r$, we have
  $\mathit{solv}(\ol{s}=\ol{u}\land d\land c) = \true$.
  Since $Q\leq S$, we have that
  $\mystate{c}{p(\ol{u})} = \mystate{c'}{p(\ol{u'})} \land d'$ for
  some constraint $d'$. By Proposition~\ref{prop:gamma-maximality}, we
  have
  $\mystate{c}{p(\ol{u})} = (\mystate{c'}{p(\ol{u'})} \land d') \leq
  (\mystate{c'}{p(\ol{u'})} \land \gamma) = \mystate{c'\land
    \gamma}{p(\ol{u'})}$.
  By definition of concolic execution, we have
  $Q' = \mystate{\ol{s}=\ol{u}\land d\land c}{\ol{B},\ol{A}}$ and
  $S' = \mystate{\ol{s}=\ol{u'}\land d\land c'}{\ol{B},\ol{A'}}$ (we
  consider the same renaming of $r$ for simplicity).
  Therefore, the claim follows from
  $\mystate{c}{p(\ol{u})} \leq \mystate{c'\land \gamma}{p(\ol{u'})}$
\end{proof}
Finally, we can prove the main results of this section:

  \mbox{}\\
  \textit{Theorem~\ref{theorem:trace}}\\
  Let $\tuplec{Q\sep S}$ be an initial concolic state. Then, we have
  $Q \lra^\ast Q'$ iff
  $ \tuplec{Q\sep S} \lracn^\ast \tuplec{Q''\sep S'} $, where
  $Q'\equiv Q''$. Moreover, the trace of both derivations is the same.

\begin{proof} 
  The claim follows by a simple induction on the length of the
  considered derivation since concolic execution boils down to the
  standard operational semantics regarding concrete states, the
  symbolic component impose no additional constraint by
  Lemma~\ref{lemma:one-step-neg2}, and the fact that, by
  Lemma~\ref{lemma:one-step}, the relation $Q\leq S$ is correctly
  propagated to all derived concolic states.
  The fact that the traces are the same follows trivially by
  Lemma~\ref{lemma:one-step-neg2}.
\end{proof}
Before proving Theorem~\ref{theorem:sound}, we need the following
auxiliary result:

\begin{lemma} \label{lemma:sound}
  Let $\tuplec{Q\sep S}$ be a concolic state with
  $S=\mystate{d}{\ol{A}}$. If there is a concolic execution
  of the form
  $\tuplec{Q\sep S} \stackrel{r}{\lracn}_{\pi,R_Q,R_S} \tuplec{Q'\sep
    \mystate{d'\wedge d}{\ol{B}}}$, then $S\wedge d'\lra_r S''$ with
  $S''\equiv \mystate{d'\wedge d}{\ol{B}}$.
  Furthermore, $\rules(\catom(S\wedge d')) = R_Q$.
\end{lemma}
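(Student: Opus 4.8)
I would start by unfolding the hypothesis with Definition~\ref{def:concolicexec}. Write $S=\mystate{d}{p(\ol{u'}),\ol{A'}}$ (so $\ol{A}=p(\ol{u'}),\ol{A'}$), let $p(\ol s)\leftarrow c\wedge\ol{B'}$ be the fresh copy of $r$ used in the step, and let $\gamma=\negcon(\catom(S),\catom(R_S\setminus R_Q))$. Then the symbolic side of the step is $S\wedge\gamma\lra_r S'$ with $S'=\mystate{\ol s=\ol{u'}\wedge c\wedge d\wedge\gamma}{\ol{B'},\ol{A'}}$; comparing with the stated target $\mystate{d'\wedge d}{\ol B}$ we read off $\ol{B}=\ol{B'},\ol{A'}$ and $d'=\ol s=\ol{u'}\wedge c\wedge\gamma$ (up to reordering of conjuncts). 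The two facts to record here are that $d'$ already mentions the variables of the copy of $r$, and that the constraint $d\wedge d'$ of $S\wedge d'$ coincides with the constraint of $S'$; in particular it is satisfiable and implies that $\ol{u'}$ unifies with the head of $r$.

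Next I would prove the step $S\wedge d'\lra_r S''$ with $S''\equiv\mystate{d'\wedge d}{\ol B}$. Since $d\wedge d'$ is satisfiable and contains the conjuncts $\ol s=\ol{u'}$ and $c$, a derivation step of $\mystate{d\wedge d'}{p(\ol{u'}),\ol{A'}}$ by $r$ is available; because $\ol s$ and the variables of $\ol{B'}$ already occur in $S\wedge d'$, standardization apart forces this step to use a \emph{new} fresh copy $p(\ol t)\leftarrow e\wedge\ol D$ of $r$, yielding $S''=\mystate{\ol t=\ol{u'}\wedge e\wedge d\wedge d'}{\ol D,\ol{A'}}$. Its constraint is satisfiable: extend any valuation of $d\wedge d'$ by binding $\ol t$ and the variables of $\ol D$ exactly as $\ol s$ and the variables of $\ol{B'}$ are bound. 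To see $S''\equiv S'$, it then suffices to note that the extra conjunct $\ol t=\ol{u'}\wedge e$ is a renamed copy of the conjunct $\ol s=\ol{u'}\wedge c$ already present in $d'$, hence redundant, while $\ol D$ is a variant of $\ol{B'}$ whose correspondence is pinned down by the shared constraint on $\ol{u'}$; a routine back-and-forth on valuations gives $\set(S'')=\set(S')$. (As in the other proofs of this appendix, the renaming bookkeeping can be abbreviated by pretending the same variant of $r$ is used, in which case $S''$ is literally $\mystate{d'\wedge d}{\ol B}$.) This part is routine.

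Finally, for $\rules(\catom(S\wedge d'))=R_Q$, the inclusion $\subseteq$ is immediate: $\catom(S\wedge d')$ is $\catom(S\wedge\gamma)$ further conjoined with $\ol s=\ol{u'}\wedge c$, so $S\wedge d'\leq S\wedge\gamma$, and Lemmas~\ref{lemma:one-step-neg1} and~\ref{lemma:one-step-neg2} yield $\rules(\catom(S\wedge d'))\subseteq\rules(\catom(S\wedge\gamma))=R_Q$. The reverse inclusion is the delicate point, and the one I would spend most care on. Given $r''\in R_Q=\rules(Q,P)$, so that $\catom(Q)$ unifies with a fresh copy of $r''$, one needs $\catom(S\wedge d')$ to unify with that copy as well; the route to try is to establish $\catom(Q)\leq\catom(S\wedge d')$ and then invoke Lemma~\ref{lemma:one-step-neg1}, in the style of Lemma~\ref{lemma:one-step-neg2} (which obtains $\catom(Q)\leq\catom(S\wedge\gamma)$ from $Q\leq S$ and Proposition~\ref{prop:gamma-maximality}). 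The obstacle is that passing from $\catom(S\wedge\gamma)$ to $\catom(S\wedge d')$ additionally restricts $\ol{u'}$ to the calls that unify with the head of $r$, so bare $Q\leq S$ is not enough — the argument must bring in the concrete side (the facts $r\in R_Q$ and $Q\lra_r Q'$) to control how $\catom(Q)$ relates to the head of $r$, following the pattern of Lemmas~\ref{lemma:one-step-neg2} and~\ref{lemma:one-step}.
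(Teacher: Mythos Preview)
Your treatment of the first claim is correct and matches the paper's argument: both unfold the concolic step, read off $d'$, and re-apply $r$ to $S\wedge d'$ with a second fresh copy of the rule, observing that the extra renamed conjuncts are redundant (the paper makes exactly the simplification you suggest, of pretending the same variant is reused).

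On the second claim your caution is well placed --- in fact the equality $\rules(\catom(S\wedge d'))=R_Q$ is \emph{false} as stated, so no argument along the lines you sketch can succeed. Take $Q=\mystate{\true}{p(X)}$, $S=\mystate{\true}{p(Y)}$ and rules $r_1\colon p(a)\leftarrow\true$, $r_2\colon p(b)\leftarrow\true$; then $R_Q=R_S=\{r_1,r_2\}$, hence $\gamma=\true$, and the concolic step with $r_1$ yields $d'\equiv(Y{=}a)$. Now $\catom(S\wedge d')=\mystate{Y{=}a}{p(Y)}$ does not unify with $r_2$, so $\rules(\catom(S\wedge d'))=\{r_1\}\subsetneq R_Q$, and in particular $\catom(Q)\not\leq\catom(S\wedge d')$. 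The paper's one-line justification (``follows trivially by Lemma~\ref{lemma:one-step-neg2}'') delivers only $\rules(S\wedge\gamma)=R_Q$; the passage from $S\wedge\gamma$ to $S\wedge d'$ discards precisely the rules whose heads are incompatible with that of $r$, which is the obstacle you identified. What \emph{does} hold is the inclusion $\rules(\catom(S\wedge d'))\subseteq R_Q$ (via $S\wedge d'\leq S\wedge\gamma$ and Lemmas~\ref{lemma:one-step-neg1}--\ref{lemma:one-step-neg2}), and this, together with the first part of the lemma, is all that the inductive proof of Theorem~\ref{theorem:sound} actually requires.
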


\begin{proof}
  Let $Q = \mystate{c}{p(\ol{u}),\ol{A}}$ and
  $S=\mystate{c'}{p(\ol{u'},\ol{A'}}$. Let
  $r' = (p(\ol{s})\leftarrow d\land \ol{B})$ be a fresh variant of
  rule $r$. Then, we have
  $Q' = \mystate{\ol{s}=\ol{u}\land d\land c}{\ol{B},\ol{A}}$ and
  $S' = \mystate{\ol{s}=\ol{u'}\land d\land c'}{\ol{B},\ol{A'}}$ (we
  considered the same renaming $r'$ of $r$ for simplicity).
  Let us now consider
  $(S\land d') = \mystate{\ol{s}=\ol{u'}\land d\land
    c'}{p(\ol{u'},\ol{A'}}$.  Let
  $r'' = (p(\ol{s'})\leftarrow d'\land \ol{B'})$ be another fresh
  variant of rule $r$. 
  
  Then, we have
  $S'' = \mystate{\ol{s'}=\ol{u'}\land d'\land \ol{s}=\ol{u'}\land
    d\land c'}{\ol{B'},\ol{A'}}$. Trivially, we have $S'\equiv S''$.
  The fact that $\rules(\catom(S\wedge d')) = R_Q$ then follows
  trivially by Lemma~\ref{lemma:one-step-neg2}.
\end{proof}
Now, we can  prove the soundness of concolic execution:

  \mbox{}\\
  \textit{Theorem~\ref{theorem:sound} (soundness)}\\
  Let $\tuplec{Q\sep S_\epsilon}$ be an initial concolic state with
  $\tuplec{Q\sep S_\epsilon} \lracn^\ast \tuplec{Q' \sep S'_\pi}$. Let
  $S = \mystate{\true}{\ol{A}}$ and $S' = \mystate{d}{\ol{B}}$. Then,
  we have $\mystate{d}{\ol{A}} \lra^\ast S''$ such that $S'\equiv S''$
  and the associated trace is $\pi$.

\begin{proof} 
  The proof is a simple induction on the length of the concolic
  execution derivation using Lemma~\ref{lemma:sound}.
\end{proof}

\section{Some Examples of Concolic Testing} \label{apendix:examples}

In this section, we show some detailed examples of the use of function
$\alts$ from Section~\ref{sec:concolic-testing-proc} as well as an
example of the concolic testing procedure.

\begin{example}[$CLP(\cT{erm})$] 
  Let $I = \mystate{true}{p(W)}$ and $C=\mystate{c}{q(N)}$ with
  $c=(W=N)$. Let
  $H_1 = \mystate{X=a}{q(X)}$ and $H_2 = \mystate{\true}{q(s(M))}$. For
  brevity, we remove the occurrences of $\true$ in the formul\ae{}
  below.
  \begin{itemize}
  \item Let $\cH^+ = \emptyset$ and $\cH^- = \{H_1, H_2\}$.  Then, we have
    $\gamma_1 = \negcon(C,\cH^-) = \forall X\; (N\neq X\vee X\neq
    a)\land \forall M\; (N\neq s(M))$. As $\cD \models_v (c\land\gamma_1)$
    holds for any valuation $v$ with $\{W\mapsto b,N\mapsto b\}\subseteq v$,
    $c\land\gamma_1$ is satisfiable.

  \item Let $\cH^+ = \{H_1\}$ and $\cH^- = \{H_2\}$.  Then, we have
    $\gamma_2 = \negcon(C,\cH^-) = \forall M (N\neq s(M))$. As
    $\cD \models_v (c\land\gamma_2)$ holds for any valuation
    $v$ with $\{W\mapsto a,N\mapsto a\}\subseteq v$,
    $c\land\gamma_2$ is satisfiable.

  \item Let $\cH^+ = \{H_2\}$ and $\cH^- = \{H_1\}$.  Then, we have
    $\gamma_3 = \negcon(C,\cH^-) = \forall X (N\neq X\vee X\neq a)$. As
    $\cD \models_v (c\land\gamma_3)$ holds for any valuation
    $v$ with $\{W\mapsto b,N\mapsto b\} \subseteq v$,
    $c\land\gamma_3$ is satisfiable.

  \item Finally, let $\cH^+ = \{H_1, H_2\}$ and $\cH^- = \emptyset$.
    Then, $\gamma_4 = \negcon(C,\cH^-) = \true$.
  \end{itemize}
  Now, let us consider the following call:
  $\alts(I,C,\{H_2\},\{H_1,H_2\})$. According to the definition of
  function $\alts$, we should consider the following three
  possibilities:
  \begin{itemize}
  \item $\cH^+ = \emptyset$ and $\cH^- = \{H_1,H_2\}$. Since $\cH^+$
    is empty, we can immediately conclude that
    $I \land c \land \gamma_1 \in \alts(I,C,\{H_2\},\{H_1,H_2\})$,
    \ie we produce the following state:
    \[
      \mystate{W=N\land \forall X\; (N\neq X\vee X\neq a)\land
        \forall M\; (N\neq s(M))}{p(W)}
    \]
    which could be simplified to
    $\mystate{\forall M\; (W\neq a\land W\neq s(M))}{p(W)}$.

  \item $\cH^+ = \{H_1\}$ and $\cH^- = \{H_2\}$.  Here, we should
    check that $C\land\gamma_2\approx H_1$ holds, which is true since
    $C\land\gamma_2 = \mystate{W=N\land \forall M\; (N\neq s(M))}{q(N)}$ and
    $\mathit{solv}(N=X\land X=a\land W=N\land \forall M\; (N\neq
    s(M)))=\true$. Therefore, we have
    $I\land c\land \gamma_2 \in \alts(I,C,\{H_2\},\{H_1,H_2\})$, \ie
    we produce the following state:
    \[
      \mystate{W=N\land \forall M\; (N\neq s(M))}{p(W)}
    \]
    which could be simplified to
    $\mystate{\forall M\; W\neq s(M)}{p(W)}$.
  \item $\cH^+ = \{H_1,H_2\}$ and $\cH^- = \emptyset$. In this case,
    we should check that $C\approx H_1$ and $C\approx H_2$, which is
    true. Therefore, $I\land c \in \alts(I,C,\{H_2\},\{H_1,H_2\})$,
    where $I\land c = \mystate{W=N}{p(W)}$, which could be
    simplified to $\mystate{\true}{p(W)}$.
  \end{itemize}
  To summarize, in this case we have
  \[
    \begin{array}{l@{~}l@{~}l}
    \alts(I,C,\{H_2\},\{H_1,H_2\}) = \{ & \mystate{\forall M\; (W\neq
                                          a\land W\neq s(M))}{p(W)}, \\
      & \mystate{\forall M\; W\neq
        s(M)}{p(W)}, \mystate{\true}{p(W)}
        & \}
    \end{array}
    \]
\end{example}

\pagebreak

\begin{example}[CLP($\cN)$]
  Let $I = \mystate{\true}{p(W)}$,
  $C = \mystate{c}{q(X)}$,
  $\cH_Q = \{H_1\}$ and $\cH_S = \{H_1,H_2,H_3\}$, with
  $c=(W=X\land X \le 10)$,
  $H_1 = \mystate{Y\le 2}{q(Y)}$,
  $H_2 = \mystate{8\le Z\le 10}{q(Z)}$ and
  $H_3 = \mystate{T < 5}{q(T)}$.  Consider the case
  $\cH^+ = \{H_1,H_2\}$ and $\cH^- = \{H_3\}$.

  First, we should compute $\gamma = \negcon(C,\cH^-)$, \ie
  $\forall T\; (X\neq T\vee 5\le T)$, which can be simplified to
  $\gamma = (5\le X)$. So,
  $c\land\gamma=(W=X\land X \le 10 \land 5\le X)$
  can be simplified to $c\land\gamma=(W=X\land 5\le X \le 10)$,
  which is clearly satisfiable. Now, we should check that
  $C\land \gamma = \mystate{W=X\land 5\le X \le 10}{q(X)}$
  unifies with both $H_1$ and $H_2$ in order to produce an element of
  $\alts(I,C,\cH_Q,\cH_S)$:
  \begin{itemize}
  \item $C\land\gamma \approx H_1$. In this case, we have
    $\mathit{solv}(X=Y\land  Y\le 2\land W=X\land 5\le X
    \le 10) = \false$.
  \item $C\land\gamma \approx H_2$. In this case, we have
    $\mathit{solv}(X=Z\land 8\le Z\le 10\land W=X\land 5\le X
    \le 10) = \true$ (consider, \eg any valuation $v$ with
    $\{X\mapsto 9,Z\mapsto 9, W\mapsto 9\}\subseteq v$).
  \end{itemize}
  Therefore, this case is not feasible.

  Let us now consider instead the case $\cH^+ = \{H_1,H_3\}$ and
  $\cH^- = \{H_2\}$.
  First, we should compute $\gamma' = \negcon(C,\cH^-)$, \ie
  $\forall Z\; (X\neq Z\vee Z<8\vee Z>10)$, which can be simplified to
  $\gamma' = (X<8\vee X>10)$. So,
  $c\land\gamma'=(W=X\land  X \le 10 \land (X<8\vee X>10))$
  can be simplified to $c\land\gamma'=(W=X\land  X < 8)$,
  which is clearly satisfiable. Now, we should
  check that $C\land \gamma'$ unifies with both $H_1$ and $H_3$ in
  order to produce an element of $\alts(I,C,\cH_Q,\cH_S)$:
  \begin{itemize}
  \item $C\land\gamma' \approx H_1$. In this case, we have
    $\mathit{solv}(X=Y\land  Y\le 2\land W=X\land  X < 8)
    =\true$ (consider, \eg any valuation $v$ with
    $\{X\mapsto 1,Y\mapsto 1,W\mapsto 1\}\subseteq v$).

 \item $C\land\gamma' \approx H_3$. In this case, we have
    $\mathit{solv}(X=T\land T<5\land W=X\land  X < 8) =\true$
    (consider, \eg any valuation $v$ with
    $\{X\mapsto 4,T\mapsto 4,W\mapsto 4\}\subseteq v$).
  \end{itemize}
  Therefore, we have
  $I\land c \land \gamma'\in\alts(I,C,\cH_S,\cH_Q)$, \ie
  we produce the state:
  \[
    \mystate{W=X\land  X < 8}{p(W)}   
  \]
  which can be simplified to $\mystate{W < 8}{p(W)}$.
\end{example}

\begin{example}[concolic testing]
  Consider again the CLP($\cT{erm}$) program of Example~\ref{ex:concolicexec}.
  Given $\mystate{N=a}{p(N)}$ as the initial concrete state, concolic
  testing starts with the following initial configuration:
  \[
    (\emptyset,\{\mystate{N=a}{p(N)}\},\emptyset,\mystate{\true}{p(N')},
    \tuplec{\mystate{N=a}{p(N)}\sep \mystate{\true}{p(N')}_\epsilon})
  \]
  Let $\TC_0 = \{\mystate{N=a}{p(N)}\}$ and
  $I=\mystate{\true}{p(N')}$.
  Then, concolic testing proceeds as follows:
  {\[
    \begin{array}{l}
    (\emptyset,\TC_0,\emptyset,I,
      \tuplec{\mystate{N=a}{p(N)}\sep \mystate{\true}{p(N')}_\epsilon})\\
      \leadsto_\blue{\sf alts (choice)}
    (\PTC_1,\TC_0,\{\epsilon\},I,
      \tuplec{\mystate{N=a}{p(N)}^{r_1}\sep
      \mystate{\forall Y' (N'\neq s(Y'))}{p(N')}_\epsilon^{r_1}})\\
      \leadsto_\blue{\sf skip (unfold)}
    (\PTC_1,\TC_0,\{\epsilon\},I,
      \tuplec{\mystate{X=N\land X=a\land N=a}{\epsilon}\\
      \hspace{12ex}\sep\mystate{X'=N'\land X'=a\land \forall Y' (N'\neq s(Y'))}{\epsilon}_{\ell_1}})\\
      \leadsto_\blue{\sf restart}
    (\PTC_2,\TC_1,\{\epsilon\},I,
      \tuplec{\mystate{\true}{p(N')}\sep
      \mystate{\true}{p(N')}_{\epsilon}})\\
      \leadsto_\blue{\sf skip (choice)}
    (\PTC_2,\TC_1,\{\epsilon\},I,
      \tuplec{\mystate{\true}{p(N')}^{r_1},
      \mystate{\true}{p(N')}^{r_2}\\
      \hspace{12ex}\sep
      \mystate{\true}{p(N')}_{\epsilon}^{r_1}, \mystate{\true}{p(N')}_{\epsilon}^{r_2}})\\
      \leadsto_\blue{\sf skip (unfold)}
    (\PTC_2,\TC_1,\{\epsilon\},I,
      \tuplec{\mystate{X=N'\land X=a}{\epsilon},
      \mystate{\true}{p(N')}^{r_2}
      \\ \hspace{12ex}\sep
      \mystate{X'=N'\land X'=a}{\epsilon}_{\ell_1}, \mystate{\true}{p(N')}_{\epsilon}^{r_2}})\\
      \leadsto_\blue{\sf skip (next)}
    (\PTC_2,\TC_1,\{\epsilon\},I,
      \tuplec{\mystate{\true}{p(N')}^{r_2}\sep
      \mystate{\true}{p(N')}_{\epsilon}^{r_2}})\\
      \leadsto_\blue{\sf skip (unfold)}
    (\PTC_2,\TC_1,\{\epsilon\},I,
      \tuplec{\mystate{s(Y)=N'}{q(Y)}\sep
      \mystate{s(Y')=N'}{q(Y')}_{\ell_2}})\\
      \leadsto_\blue{\sf alts (choice)}
    (\PTC_3,\TC_1,\{\epsilon,\ell_2\},I,
      \tuplec{\mystate{s(Y)=N'}{q(Y)}^{r_3}\sep
      \mystate{s(Y')=N'}{q(Y')}_{\ell_2}^{r_3}})\\
      \leadsto_\blue{\sf skip (unfold)}
    (\PTC_3,\TC_1,\{\epsilon,\ell_2\},I,
      \tuplec{\mystate{W=Y\land W=a\land s(Y)=N'}{\epsilon}\\
      \hspace{12ex}\sep
      \mystate{W'=Y'\land W'=a\land s(Y')=N'}{\epsilon}_{\ell_2\ell_3}})\\
      \leadsto_\blue{\sf restart}
    (\PTC_4,\TC_2,\{\epsilon,\ell_2\},I,
      \tuplec{\mystate{N'\neq a}{p(N')}\sep
      \mystate{\true}{p(N')}_{\epsilon}})\\
      \leadsto_\blue{\sf skip (choice)}
    (\PTC_4,\TC_2,\{\epsilon,\ell_2\},I,
      \tuplec{\mystate{N'\neq a}{p(N')}^{r_2}\sep
      \mystate{\forall X' (X'\neq N'\vee X'\neq a)}{p(N')}_{\epsilon}^{r_2}})\\
      \leadsto_\blue{\sf skip (unfold)}
    (\PTC_4,\TC_2,\{\epsilon,\ell_2\},I,
      \tuplec{\mystate{s(Y)=N'\land N'\neq a}{q(Y)}\\
      \hspace{12ex}\sep
      \mystate{s(Y')=N'\land \forall X' (X'\neq N'\vee X'\neq a)}{q(Y')}_{\ell_2}})\\
      \leadsto_\blue{\sf skip (choice)}
    (\PTC_4,\TC_2,\{\epsilon,\ell_2\},I,
      \tuplec{\mystate{s(Y)=N'\land N'\neq a}{q(Y)}^{r_3}\\
      \hspace{12ex}\sep
      \mystate{s(Y')=N'\land \forall X' (X'\neq N'\vee X'\neq a)}{q(Y')}_{\ell_2}^{r_3}})\\
      \leadsto_\blue{\sf skip (unfold)}
    (\PTC_4,\TC_2,\{\epsilon,\ell_2\},I,
      \tuplec{\mystate{W=Y\land W=a\land s(Y)=N'\land N'\neq
      a}{\epsilon}\\
      \hspace{12ex}\sep
      \mystate{W'=Y'\land W'=a\land s(Y')=N'\land \forall X' (X'\neq N'\vee X'\neq a)}{\epsilon}_{\ell_2\ell_3}})\\
      \leadsto_\blue{\sf restart}
    (\PTC_5,\TC_3,\{\epsilon,\ell_2\},I,
      \tuplec{\mystate{N'\neq a \land \forall Y' (N'\neq s(Y'))}{p(N')}\sep
      \mystate{\true}{p(N')}_{\epsilon}})\\
      \leadsto_\blue{\sf restart}
    (\PTC_6,\TC_4,\{\epsilon,\ell_2\},I,
      \tuplec{\mystate{\forall W' (Y'\neq W'\vee W'\neq a)\land s(Y')=N'}{p(N')}\sep
      \mystate{\true}{p(N')}_{\epsilon}})\\
      \leadsto_\blue{\sf skip (choice)}
    (\PTC_6,\TC_4,\{\epsilon,\ell_2\},I,
      \tuplec{\mystate{\forall W' (Y'\neq W'\vee W'\neq a)\land
      s(Y')=N'}{p(N')}^{r_2}\\
      \hspace{12ex}\sep
      \mystate{\forall X' (X'\neq N'\vee  X'\neq a)}{p(N')}_{\epsilon}^{r_2}})\\
      \leadsto_\blue{\sf skip (unfold)}
    (\PTC_6,\TC_4,\{\epsilon,\ell_2\},I,
      \tuplec{\mystate{s(Y)=N'\land \forall W' (Y'\neq W'\vee W'\neq
      a)\land s(Y')=N'}{q(Y)}\\
      \hspace{12ex}\sep
      \mystate{s(Y')=N'\land\forall X' (X'\neq N'\vee  X'\neq a)}{q(Y')}_{\ell_2}})\\
      \not\leadsto
    \end{array}
  \]}
  where
  {
    \[
    \begin{array}{l}
      \PTC_1 = \{\mystate{\true}{p(N')},\mystate{N'\neq a}{p(N')},
      \mystate{N'\neq a \land \forall Y' (N'\neq s(Y'))}{p(N')}\}\\
      \PTC_2 = \{\mystate{N'\neq a}{p(N')},
      \mystate{N'\neq a \land \forall Y' (N'\neq s(Y'))}{p(N')}\}\\
      \TC_1 = \{ \mystate{\true}{p(N')}, \mystate{N=a}{p(N)}\}\\
      \PTC_3 = \{\mystate{N'\neq a}{p(N')},
      \mystate{N'\neq a \land \forall Y' (N'\neq s(Y'))}{p(N')},\\
      \hspace{9ex}\mystate{\forall W' (Y'\neq W'\vee W'\neq a)\land s(Y')=N'}{p(N')} \}\\
      \PTC_4 = \{
      \mystate{N'\neq a \land \forall Y' (N'\neq s(Y'))}{p(N')},
      \mystate{\forall W' (Y'\neq W'\vee W'\neq a)\land s(Y')=N'}{p(N')} \}\\
      \TC_2 = \{ \mystate{N'\neq a}{p(N')},\mystate{\true}{p(N')}, \mystate{N=a}{p(N)}\}\\
      \PTC_5 = \{
      \mystate{\forall W' (Y'\neq W'\vee W'\neq a)\land s(Y')=N'}{p(N')} \}\\
      \TC_3 = \{ \mystate{N'\neq a \land \forall Y' (N'\neq s(Y'))}{p(N')},\mystate{N'\neq a}{p(N')},\mystate{\true}{p(N')}, \mystate{N=a}{p(N)}\}\\
      \PTC_6 = \{\:\}\\
      \TC_4 = \{ \mystate{\forall W' (Y'\neq W'\vee W'\neq a)\land s(Y')=N'}{p(N')},
      \mystate{N'\neq a \land \forall Y' (N'\neq s(Y'))}{p(N')},\\
      \hspace{9ex}\mystate{N'\neq a}{p(N')},\mystate{\true}{p(N')}, \mystate{N=a}{p(N)}\}\\
    \end{array}
  \]}
  Therefore, the set of test cases produced by our algorithm is
  $\TC_4$, which cover all execution paths:
  \begin{itemize}
  \item test case $\mystate{N=a}{p(N)}$ follows the trace $\ell_1$;
  \item test case $\mystate{\true}{p(N')}$ follows the trace $\ell_1$,
    then backtracks, and finally follows trace $\ell_2\ell_3$;
  \item test case $\mystate{N'\neq a}{p(N')}$ follows the trace
    $\ell_2\ell_3$;
  \item test case
    $\mystate{N'\neq a \land \forall Y' (N'\neq s(Y'))}{p(N')}$
    matches no rule;
  \item finally, test case $ \mystate{\forall W' (Y'\neq W'\vee W'\neq
      a)\land s(Y')=N'}{p(N')}$ follows a trace $\ell_2$ and,  then, fails.
  \end{itemize}
\end{example}

\section{Proofs for Section~\ref{sec:connections-csup}}

In this section, we show the proofs of some technical results from
Section~\ref{sec:connections-csup}.

\mbox{}\\
\textit{Proposition~\ref{prop:alts-in-p}-1}\\
  Suppose that $c\land\gamma$ is satisfiable and $C\land\gamma$ unifies
  with each element of $\cH^+$. Then, we have $C\land\gamma \equiv C'$
  for some $C'\in\cP$.

Note that $C\land\gamma\not\in\cP$. 
Indeed, as $\gamma$ contains
the variables of $\cH^-$, we have that $C\land\gamma$ is not variable
disjoint with $\cH^+ \cup \cH^-$, so the condition
``$C\land d$ is variable disjoint with $\cH^+ \cup \cH^-$"
in Def.~\ref{def:constraint_sup} does not hold for $C\land\gamma$.

\begin{proof} 
  Let $\gamma'$ be a variant of $\gamma$ where the variables occurring
  in $\cH^-$ have been renamed to new, fresh, variables. Then
  $C\land \gamma'$ is variable disjoint with $\cH^-$. Moreover, as
  all the variables of $\cH^-$ are bound in $\gamma$,
  $c\land\gamma'$ is satisfiable and $C\land\gamma'$ unifies
  with each element of $\cH^+$. Also, Prop.~\ref{prop:gamma-do-not-unify}
  is valid for $C\land \gamma'$ \ie $C\land \gamma'$ does not
  unify with any constraint atom in $\cH^-$.
  Therefore, by Def.~\ref{def:constraint_sup} with $d=\gamma'$, we have
  $C\land\gamma' \in \cP$.
  Note that we also have $\set(C\land \gamma) = \set(C\land\gamma')$.
  Hence the result, with $C'=(C\land\gamma')$.
\end{proof}

\mbox{}\\
\textit{Proposition~\ref{prop:alts-in-p}-2}\\
  For each $C'\in\cP$ we have $C' \leq (C\land\gamma)$.

﻿So intuitively, $C\land\gamma$ is maximal. 

\begin{proof} 
  By Def.~\ref{def:constraint_sup} and Prop.~\ref{prop:gamma-maximality}.
\end{proof}

\mbox{}\\
\textit{Proposition~\ref{prop:alts-in-p}-3}\\
  If $\cP\neq\emptyset$ then $C\land\gamma$ unifies
  with each constraint atom in $\cH^+$.

\begin{proof} 
  Suppose that $\cP\neq\emptyset$. Let $C\land d \in\cP$.
  Then, by Def.~\ref{def:constraint_sup} and
  Prop.~\ref{prop:gamma-maximality}, we have
  $\set(C\land d) \subseteq \set(C\land \gamma)$.
  Moreover, for each $H\in\cH^+$, as $C\land d$
  unifies with $H$ we have
  $\set(C\land d)\cap\set(H)\neq\emptyset$
  (by Lemma~\ref{lemma:unify-set}).
  Hence, for each $H\in\cH^+$, we have
  $\set(C\land \gamma)\cap\set(H)\neq\emptyset$ \ie
  $C\land \gamma$ unifies with $H$.
\end{proof}

\mbox{}\\
\textit{Theorem~\ref{thm:if-C-and-gamma-unifies}}\\
  If $C\land\gamma$ unifies with each element of $\cH^+$
  then $\set(C\land\gamma) = \set(\cP)$.

\begin{proof} 
  Suppose that $C\land\gamma$ unifies with each element of $\cH^+$.
  \begin{itemize}
    \item If $c\land\gamma$ is not satisfiable, then we have
    $\set(C\land\gamma)=\emptyset$. Consequently,
    by Prop.~\ref{prop:alts-in-p}-2 we have $\set(C')=\emptyset$
    for all $C'\in\cP$. Therefore, we have
    $\set(\cP)=\emptyset=\set(C\land\gamma)$.
    \item If $c\land\gamma$ is satisfiable, then, by
    Prop.~\ref{prop:alts-in-p}-1 we have
    $\set(C\land\gamma) \subseteq \set(\cP)$.
    By Prop.~\ref{prop:alts-in-p}-2, we also have
    $\set(\cP) \subseteq \set(C\land\gamma)$.
    Hence the result.
  \end{itemize}
\end{proof}

\mbox{}\\
\textit{Corollary~\ref{cor:decidability}}\\
  The \emph{constraint selective unification problem} for $C$ with
  respect to $\cH^+$ and $\cH^-$ is decidable.

\begin{proof} 
  We test whether $c \land \gamma$ is  satisfiable and
  $C\land\gamma$ unifies with each element of $\cH^+$.
  Both conditions are decidable because we assume that the constraint
  solver can decide any first-order formula of the constraint domain.\\
  If $C \land \gamma$ does not
  unify with one constraint atom of $\cH^+$ then $\cP = \emptyset$ by
  Prop.~\ref{prop:alts-in-p}-3.
  Otherwise, if $c \land \gamma$ is  not satisfiable,
  then $\set(C\land\gamma)=\emptyset$, and as $\set(C\land\gamma)=\set(\cP)$
  by Theorem~\ref{thm:if-C-and-gamma-unifies},
  we have $\set(\cP)=\emptyset$ hence $\cP=\emptyset$.
  Else, by Prop.~\ref{prop:alts-in-p}-1, we know that $\cP\neq\emptyset$.
  Note that in this latter case we know from Theorem~\ref{thm:if-C-and-gamma-unifies}
  that $\set(\cP)=\set(C\land\gamma)$.
\end{proof}

\end{document}